\pgfplotsset{compat=1.14}
\def\D{ {\cal D} }
\def\E{ {\cal E} }
\def\H{ {\cal H} }
\def\I{ {\cal I} }
\def\L{ {\cal L} }
\def\>{\rangle}
\def\<{\langle}
\newcommand{\bra}[1]{\langle {#1} |}
\newcommand{\ket}[1]{| {#1} \rangle}
\newcommand{\ketbra}[2]{\ensuremath{\left|#1\right\rangle\!\!\left\langle#2\right|}}
\newcommand{\matrixel}[3]{\ensuremath{\left\langle #1 \vphantom{#2#3} \right| #2 \left| #3 \vphantom{#1#2} \right\rangle}}
\newcommand{\tr}[1]{\mathrm{Tr}\left( #1 \right)}
\newcommand{\iden}{\mathbbm{1}}
\renewcommand{\v}[1]{\ensuremath{\boldsymbol #1}}
\definecolor{ppblue}{RGB}{46,117,182}
\definecolor{ppred}{RGB}{197, 90, 17}
\theoremstyle{plain}
\newtheorem{thm}{Theorem}
\newtheorem{res}{Result}
\newtheorem{lem}[thm]{Lemma}
\newtheorem{ex}{Example}
\newtheorem{cor}[thm]{Corollary}
\theoremstyle{definition}
\newtheorem{defn}{Definition}
\newcolumntype{C}[1]{>{\centering\arraybackslash}p{#1}}
\definecolor{tikzBlue}{rgb}{0.6941176470588235,0.7568627450980392,0.8588235294117647}
\definecolor{tikzOrange}{rgb}{0.9294117647058824,0.7647058823529411,0.49019607843137253}
\definecolor{tikzBlue2}{rgb}{0.462745098,0.504575163,0.57254902}
\definecolor{tikzOrange2}{rgb}{0.619607843,0.509803922,0.326797386}
\definecolor{tikzGray}{rgb}{0.7529411764705882,0.7529411764705882,0.7529411764705882}
\DeclareFontFamily{U}{mathb}{\hyphenchar\font45}
\DeclareFontShape{U}{mathb}{m}{n}{
	<-6> mathb5 <6-7> mathb6 <7-8> mathb7
	<8-9> mathb8 <9-10> mathb9
	<10-12> mathb10 <12-> mathb12
}{}
\DeclareSymbolFont{mathb}{U}{mathb}{m}{n}
\DeclareMathSymbol{\llcurly}{\mathrel}{mathb}{"CE}
\DeclareMathSymbol{\ggcurly}{\mathrel}{mathb}{"CF}
\begin{document}

\title{Quantum advantage in simulating stochastic processes}
\author{Kamil Korzekwa$^{*,}$}
\affiliation{Faculty of Physics, Astronomy and Applied Computer Science, Jagiellonian University, 30-348 Krak\'{o}w, Poland}
\affiliation{International Centre for Theory of Quantum Technologies, University of Gda{\'n}sk, 80-308 Gda{\'n}sk, Poland}
\author{Matteo Lostaglio$^{*,}$}
\affiliation{ICFO-Institut de Ciencies Fotoniques, The Barcelona Institute of Science and Technology, Castelldefels (Barcelona), 08860, Spain}
\affiliation{QuTech, Delft University of Technology, P.O. Box 5046, 2600 GA Delft, Netherlands}
\footnotetext{These authors contributed equally to this work. \\ Emails: korzekwa.kamil@gmail.com, lostaglio@protonmail.com}

\begin{abstract}
	We investigate the problem of simulating classical stochastic processes through quantum dynamics, and present three scenarios where memory or time quantum advantages arise. First, by introducing and analysing a quantum version of the embeddability problem for stochastic matrices, we show that quantum memoryless dynamics can simulate classical processes that necessarily require memory. Second, by extending the notion of space-time cost of a stochastic process $P$ to the quantum domain, we prove an advantage of the quantum cost of simulating $P$ over the classical cost. Third, we demonstrate that the set of classical states accessible via Markovian master equations with quantum controls is larger than the set of those accessible with classical controls, leading, e.g., to a potential advantage in cooling protocols. 
\end{abstract}

\maketitle


\section{Introduction}

\subsection{Memory advantages}

What tasks can we perform more efficiently by employing quantum properties of nature? And what are the quantum resources powering them? These are the central questions that need to be answered not only to develop novel quantum technologies, but also to deepen our understanding of the foundations of physics. Over the last few decades, these questions were successfully examined in the context of cryptography~\cite{gisin2002quantum}, computing~\cite{nielsen2010quantum}, simulations~\cite{georgescu2014quantum} and sensing~\cite{degen2017quantum}, proving that the quantum features of nature can indeed be harnessed to our benefit. 

More recently, an area of active theoretical and experimental interest focused on the memory advantages offered by quantum mechanics for the simulation of stochastic processes in the setting of classical causal models~\cite{suen2017classical, thompson2018causal,binder2018practical, ghafari2019dimensional}. An experimentally accessible and relevant measure of such an advantage is the dimensionality of the memory required for the simulation~\cite{ghafari2019dimensional, elliott2020extreme}. These dimensional advantages have been identified experimentally (a qubit system has been used to simulate a stochastic process that classically requires three bits~\cite{ghafari2019dimensional}), and theoretically for a certain class of Poisson processes~\cite{elliott2020extreme}.

Here we take a complementary approach starting from the following simple observation: although all fundamental interactions are memoryless, the basic information-processing primitives (such as the bit-swap operation) cannot be performed classically in a time-continuous fashion without employing implicit microscopic states that act as a memory~\cite{wolpert2019space}. We show that this picture changes dramatically if instead we consider memoryless \emph{quantum} dynamics. This is due to quantum coherence, arising from the superposition principle, which can effectively act as an internal memory of the system during the evolution. 


\subsection{Classical vs quantum}
\label{sec:intro:b}

But what do we really mean when we say that a bit-swap (or other information processing tasks) cannot be performed classically in a memoryless way? First, when we speak of a bit we mean a fundamentally two-level system, i.e., a system with only two microscopic degrees of freedom (e.g. a spin-1/2 particle) and not a macroscopic object with a coarse-grained description having two states (e.g. a piece of iron magnetised along or against the $z$ axis). Otherwise, if the system merely implements a bit in a higher dimensional state space of dimension $d>2$, the internal degrees of freedom can be used as a memory and a bit-swap can be performed, as illustrated in Fig.~\hyperref[fig:bit_flip]{1a} for $d=3$. Thus, when we speak of classical systems, we take them to be fundamentally $d$-dimensional. And when we speak of memoryless dynamics, we mean probabilistic jumps between these discrete states occurring at rates independent of the system's past. Of course, this is just the standard setting for classical Markov processes. Then, for every finite dimension $d$, there exist processes that cannot be performed in a memoryless fashion.

This setting should also be contrasted with classical systems with a continuous phase space. Probably the simplest example is given by an isolated classical pendulum, i.e., a harmonic oscillator. The state of the system is then given by a point in (more generally:  distribution over) a two-dimensional phase space $[x,p]$ describing its position $x$ relative to the equilibrium position, and its momentum $p$. If we now identify the pendulum's state $[-x_0,0]$ with a bit state 0 and $[x_0,0]$ with a bit state 1, the time evolution of the system clearly performs a bit-swap operation within half a period. However, a pendulum is not a simple two-level system, but rather a system with a continuously infinite number of states. Thus, the bit-swap is performed by employing infinitely many ancillary memory states: $[-x_0,0]$ evolves through states with $x>-x_0$ and positive momenta $p$ to $[x_0,0]$, while $[x_0,0]$ evolves through states with $x<x_0$ and negative momenta $p$ to $[-x_0,0]$, and the momentum $p$ effectively acts as a register that carries the information about the past. For classical systems with a continuous phase space it is then difficult to properly assess the number of memory states used during a given evolution. Hence in this work we will only focus on discrete systems.

\begin{figure}[t]
	\centering
	\begin{tikzpicture}
	\node (myfirstpic) at (0,0) {\includegraphics[width=0.7\columnwidth]{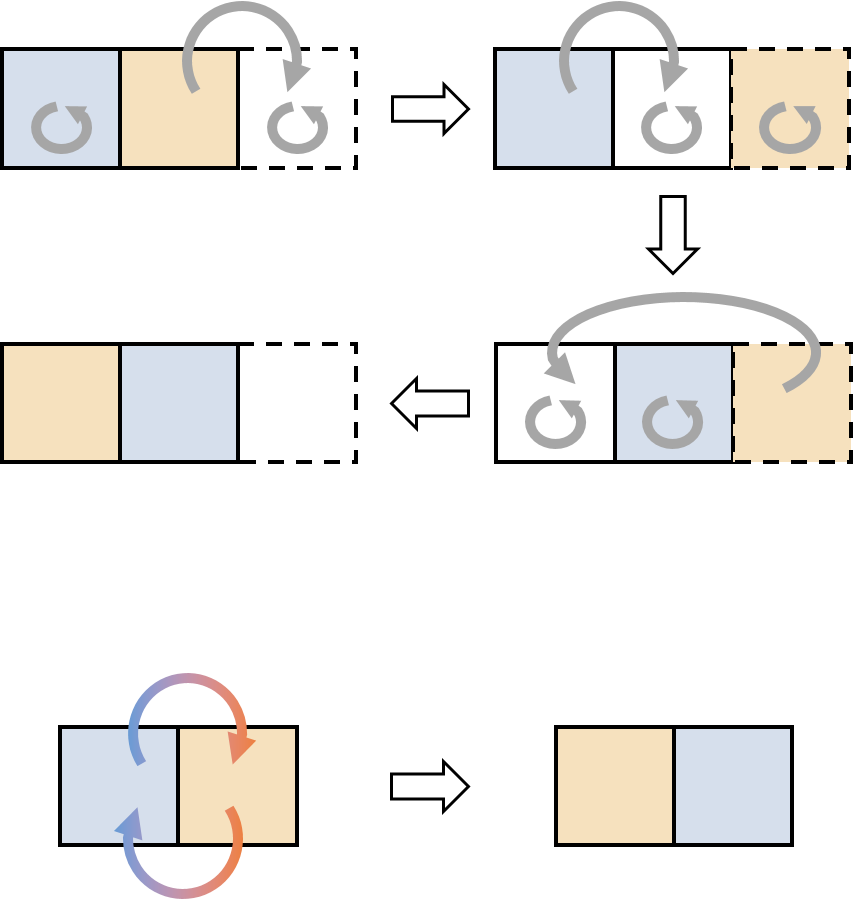}};
	\node at (-2.8,4) {\color{black} (a)};
	\node at (0,4) {\color{black} \textbf{Classical bit-swap}};
	\node at (0,3.6) {\color{black} 1 memory state, 3 time-steps};
	\node at (-2.8,-0.8) {\color{black} (b)};
	\node at (0,-0.8) {\color{black} \textbf{Quantum bit-swap}};
	\node at (0,-1.2) {\color{black} 0 memory states, 1 time-step};		
	\end{tikzpicture}
	\caption{\label{fig:bit_flip}\textbf{Space-time cost for classical and quantum bit-swap.} (a) Space-time optimal realisation of a bit-swap, i.e., a transposition between two states (solid line boxes), using one memory state (dashed line boxes) and three time-steps. Each time-step is composed of a continuous memoryless dynamics that does not affect one of the states, and maps the remaining two to one of them. (b) In the quantum regime, a bit-swap can be performed without any memory, simply by a time-continuous unitary process $\exp(i\sigma_x t)$ that continuously connects the identity operation at time $t=0$ with the bit-swap, represented by Pauli $x$ operator $\sigma_x$, at time $t=\pi/2$. During the process, the information about the initial state of the system is preserved in quantum coherence.}
\end{figure}

Ultimately we are then interested in a stochastic process in which discrete outputs $i$ are observed for given discrete inputs $j$. This process is characterized by a matrix of transition probabilities $P_{i|j}$. In the classical setting, we want to know whether there exists a classical memoryless dynamics (described by a Markovian master equation) involving these states that outputs $P_{i|j}$ after some time.  Quantum-mechanically, we are similarly asking whether a \emph{quantum} memoryless dynamics (described by a Markovian quantum master equation) can output $P_{i|j}$ after some time. In this work we highlight that these two questions admit very different answers, both in terms of which $P_{i|j}$ can arise from memoryless processes and in terms of the memory required to achieve a given~$P_{i|j}$. 
	
For concreteness, assume $P_{i|j}$ results from a thermalisation process, which typically satisfies the so-called \emph{detailed balance condition}. Physically, we are then asking whether the observed $P_{i|j}$ is compatible or not with a process involving no memory effects such as information backflows from the environment~\cite{breuer2009measure}. Classically, $P_{i|j}$ originates from incoherent jumps induced by interacting with the environment (absorbing or emitting energy). Alternatively, we can see classical dynamics as the evolution of a quantum system that undergoes very strong decoherence at all times, so that any non-classical effects are killed right away. In fact, also standard quantum thermalisation models (weak coupling with a very large thermal bath) can be understood in this way, since they are unable to generate quantum superpositions of energy states. As soon as we move away from this semiclassical limit, however, we see that more exotic thermalisation processes can generate $P_{i|j}$ that classically would necessarily signal memory effects, but that quantum mechanically can emerge from memoryless processes due to quantum coherence. 
		
		 
\subsection{Summary of results}

In this work we identify three aspects of potential quantum advantage in simulating stochastic processes. First, in Sec.~\ref{sec:embeddable}, we investigate the possibility to \emph{simulate classical processes requiring memory using quantum memoryless dynamics}. More precisely, we compare the sets of all stochastic processes that can be generated by time-continuous memoryless dynamics in the classical and quantum domains~\footnote{Note that the term ``memoryless'' is used throughout the paper as a synonym of Markovian, i.e., that the evolution only depends on the current state of the system and not on its history. Such evolution may still require an auxiliary  clock system (used, e.g., to know how much longer the system should to be coupled to an external control field) and a counter system (that may be used to record the current channel in the sequence of channels necessary to implement the given dynamics). These constitute extra resources that one may want to separately account for, e.g. using the framework of quantum clocks (see, e.g., Ref.~\cite{erket2017autonomous}).}. We prove that the latter set is strictly larger than the former one, i.e., that there exist stochastic processes that classically require memory to be implemented, but can be realised by memoryless quantum dynamics. As an example, consider a random walk on a cyclic graph with three sites, where the walker can either move clockwise, anti-clockwise, or stay in place. As we present in Fig.~\ref{fig:embed_regions}, only a small orange subset of such walks can arise from a continuous classical evolution that does not employ memory (note that, differently from other investigations~\cite{gualtieri2020quantum}, we do not put any restriction on the classical dynamics beyond the fact that it is memoryless). However, if we allow for continuous memoryless quantum evolution, all stochastic processes in the much larger blue set can be achieved. Besides this particular class, in this work we provide general constructions for whole families of stochastic processes for any finite-dimensional systems that require memory classically, but can be implemented quantumly in a memoryless fashion.

Second, in Sec.~\ref{sec:space_time}, we go beyond the simple distinction between stochastic processes that can or cannot be simulated without memory, and take a more quantitative approach, thus investigating \emph{quantum memory advantages}. To this end, we employ the recent formalism of Ref.~\cite{wolpert2019space}, which allows one to quantify the classical space-time cost of a given stochastic process, i.e., the minimal amount of memory and time-steps needed to classically implement a given process. We extend this approach to the quantum domain in order to analyse the quantum space-time cost. An illustrative example is given by the bit-swap process presented in Fig.~\ref{fig:bit_flip}, which in the classical setting requires either one memory state and three time-steps, or two memory states and two time-steps. However, if one allows for quantum evolution, such a bit-swap can be performed in a continuous and memoryless fashion through a simple unitary evolution $\exp(i\sigma_x t)$ with $\sigma_x$ denoting the Pauli $x$ operator. More generally, the authors of Ref.~\cite{wolpert2019space} have characterised the space-time cost for the family of $\{0,1\}$-valued stochastic processes (i.e., all discrete functions). Their bound shows an unavoidable classical trade-off between the number of memory states $m$ and the number of time-steps $\tau$ needed to realise a given stochastic process on $N$ systems of dimension $d$. Crucially, a typical process necessarily requires extra resources, meaning that either $m$ or $\tau$ is exponential in~$N$. In this paper we prove that in the quantum regime all such processes can be simulated with zero memory states and in at most two time-steps, demonstrating an advantage over the best possible classical implementation.

\begin{figure}[t]
	\centering
	\begin{tikzpicture}
		\node (myfirstpic) at (0,0) {\includegraphics[width=0.7\columnwidth]{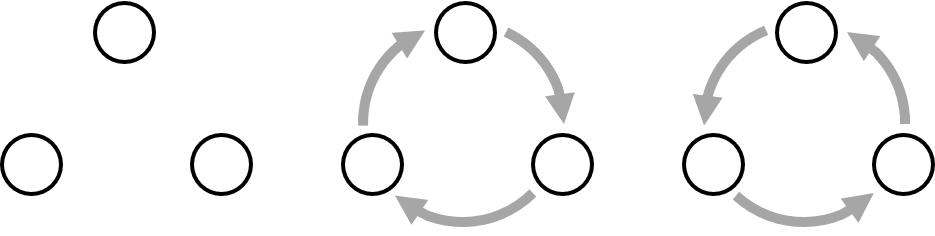}};
		\node at (0,-1.3) {\color{black} \textbf{C}lockwise};
		\node at (2.2,-1.3) {\color{black} \textbf{A}nti-clockwise};
		\node at (-2.2,-1.3) {\color{black} \textbf{S}tay};
		\node (myfirstpic) at (0,3.7) {\includegraphics[width=0.5\columnwidth]{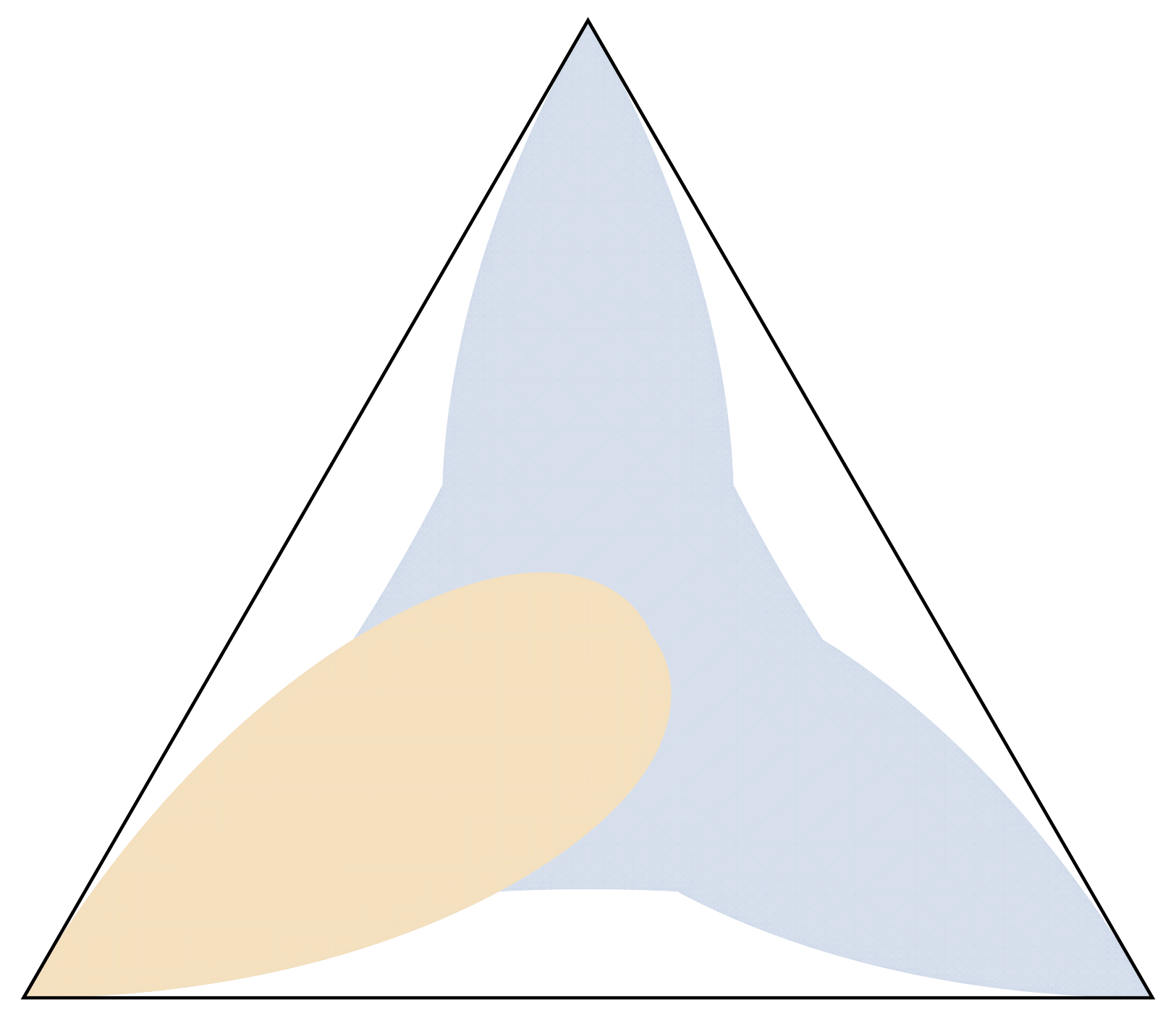}};	
		\node at (0,5.8) {\color{black} \textbf{C}};
		\node at (-2.4,1.5) {\color{black} \textbf{S}};
		\node at (2.4,1.5) {\color{black} \textbf{A}};
	\end{tikzpicture}
	\caption{\label{fig:embed_regions} \textbf{Classical vs quantum memoryless processes.} The vertices of the triangle correspond to deterministic processes (\textbf{S}: stay, \textbf{C}: move clockwise, \textbf{A}: move anti-clockwise) for a random walker moving between three states. Points inside the triangle correspond to probabilistic mixtures (convex combinations) of these three deterministic processes, e.g., the centre of the triangle corresponds to the maximally mixing dynamics (with \textbf{S}, \textbf{C} and \textbf{A} each happening with probability 1/3). The orange petal-shaped region contains all stochastic processes that can arise from time-continuous memoryless classical dynamics. For time-continuous memoryless quantum dynamics this set is enlarged by the remaining shaded region in blue. For details see Sec.~\ref{sec:examples} and, in particular, Fig.~\ref{fig:3x3_regions}.}
\end{figure}

Third, in Sec.~\ref{sec:state} we study \emph{memory advantages in control} by comparing classical and quantum continuous memoryless dynamics in terms of the set of accessible final states. We assume a fixed point of the evolution is given, which is a realistic physical constraint in dissipative processes and typically, but not necessarily, coincides with the thermal Gibbs state. A standard example is given by a thermalisation of the system to the environmental temperature. Here, we employ our recent result~\cite{lostaglio2020fundamental}  characterising the input-output relations of classical Markovian master equations with given fixed point. We show how quantum memoryless dynamics \emph{with the same fixed point} allow one to access a larger set of final states. This is most evident in the case of maximally mixed fixed points (corresponding to the environment in the infinite temperature limit), since every transformation that is classically possible with arbitrary amounts of memory can be realised in a memoryless fashion in the quantum domain. For general fixed points, we prove that an analogous result holds for systems of dimension $d=2$, and argue that the set of accessible states is strictly larger in the quantum regime than in the classical one for all $d$. Since it is known that memory effects enhance cooling~\cite{alhambra2019heat, taranto2020exponential}, a direct consequence of our results is that quantumly it is possible to bring the two-dimensional system below the environmental temperature without employing memory effects, something that is impossible classically (see Fig.~\ref{fig:cooling}).

Finally, in Sec.~\ref{sec:applications} we discuss the potential for practical applications of our results, while Sec.~\ref{sec:outlook} contains the outlook for future research.

\begin{figure}[t]
	\centering
	\begin{tikzpicture}[line cap=round,line join=round,x=2.2cm,y=2.2cm]
		\clip(-1.15130521264221,-1.2346670851205752) rectangle (1.2120740854657621,1.3343488940797);
		\node at (-0.3,0.7) {\includegraphics[width=0.9cm]{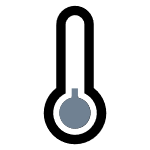}};
		\node at (-0.3,0.2) {\includegraphics[width=0.9cm]{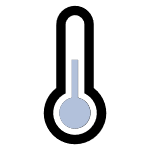}};
		\node at (-0.3,-0.4) {\includegraphics[width=0.9cm]{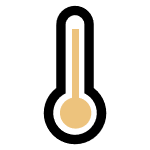}};
		\draw [line width=0.8pt] (0.,0.) circle (2.2cm);
		\draw [line width=0.8pt,color=lightgray,dash pattern=on 2pt off 2pt] (0.,1.)-- (0.,-1.);
		\draw [->,line width=1.2pt,color=black,dotted,domain=-90:85] plot({1.1319436901052642*cos(\x)/2.2},{0.12137003521531087+1.1319436901052642*sin(\x)/2.2});
		\draw [->,line width=1.2pt,color=black] (0.,0.12137003521531087-1.1319436901052642/2.2)-- (0.,0.2);
		\draw [fill=tikzOrange] (0,0.12137003521531087-1.1319436901052642/2.2) circle (2.0pt);
		\draw [fill=tikzBlue] (0.,0.25) circle (2.0pt);
		\draw [fill=tikzBlue2] (0,0.12137003521531087+1.1319436901052642/2.2) circle (2.0pt);
		\draw [fill=black] (0.,1.) circle (2.0pt);
		\draw [fill=black] (0.,-1.) circle (2.0pt);
		\node at (0,1.15) {\color{black} $\ketbra{0}{0}$};
		\node at (0,-1.15) {\color{black} $\ketbra{1}{1}$};
		\node at (0.12,0.2) {\color{black} $\gamma$};
		\node at (0.12,-0.5) {\color{black} $\rho$};
		\node at (0.12,0.75) {\color{black} $\rho'$};
	\end{tikzpicture}
	\caption{\label{fig:cooling}\textbf{Markovian cooling of a qubit.} Classical memoryless processes can only cool the initial state $\rho$ of a two-dimensional system to the thermal state $\gamma$ at the environmental temperature (path along the solid line arrow). Quantum memoryless dynamics with fixed point $\gamma$ allows one to cool the system below that, all the way to the state $\rho'$ with the lowest temperature achievable by classical processes with memory (path along the dotted line arrow). For details see Sec.~\ref{sec:qubit_ext} and, in particular, Fig.~\ref{fig:qubit_path}.}
\end{figure}


\section{Embeddability of stochastic processes}
\label{sec:embeddable}

\subsection{Classical embeddability}

Given a discrete state space, $\{1,\dots,d\}$, the state of a finite-dimensional classical system is described by a probability distribution $\v{p}$ over these states. A \emph{stochastic matrix or process} $P$ is a matrix $P_{i|j}$ of transition probabilities,
\begin{equation}
	P_{i|j} \geq 0,\quad \sum_{i} P_{i|j} =1,
\end{equation}
which describes the evolution of the system from one state $\v{p}$ to another $P\v{p}$. 
	
Classically, the $P_{i|j}$ that can be achieved without employing memory are known as \emph{embeddable}. A stochastic matrix $P$ is embeddable if it can be generated by a continuous Markov process~\cite{davies2010embeddable}. This notion can be understood as a control problem involving a master equation. Namely, introducing a \emph{rate matrix} or \emph{generator} $L$ as a matrix with finite entries satisfying 
\begin{equation}
	L_{i|j} \geq 0\mathrm{~for~}i \neq j, \quad\sum_{i} L_{i|j} =0,
\end{equation}
a continuous one-parameter family $L(t)$ of rate matrices generates a family of stochastic processes $P(t)$ satisfying
\begin{equation}
\label{eq:P}
	\frac{d}{dt} P(t) = L(t) P(t),\quad P(0)= \iden.
\end{equation}
The aim of the control $L(t)$ is to realize a target stochastic process $P$ at some final time $t_f$ as $P=P(t_f)$. If this is possible for some choice of $L(t)$, then $P$ is embeddable; and if there exists a time-independent generator $L$ such that $P = e^{L t_f}$, then we say that $P$ can be embedded by a time-homogeneous Markov process. A final technical comment is that we also consider the case $t_f = \infty$ to be embeddable (in Ref.~\cite{wolpert2019space} this case was referred to as limit-embeddable). Then, $P$ cannot be generated in any finite time, but it can be approximated arbitrarily well. This is the case, e.g., with the bit erasure process: \mbox{$0 \mapsto 0$}, \mbox{$1 \mapsto 0$}~\cite{wolpert2019space}.

The question of which stochastic matrices $P$ are embeddable is a challenging open problem that has been extensively investigated for decades~\cite{elfving1937theorie,kingman1962imbedding,runnenburg1962elfving,goodman1970intrinsic,carette1995characterizations,davies2010embeddable}. The full characterization does not go beyond $2\times 2$ and $3\times 3$ stochastic matrices, however various necessary conditions have been found. In particular, in Ref.~\cite{goodman1970intrinsic} it was proven that every embeddable stochastic matrix $P$ satisfies the following inequalities:
\begin{align}
	\label{eq:necessaryembeddable}
	\prod_i P_{i|i} \geq \det P \geq 0.
\end{align}
The condition $\det P \geq 0$ is, in fact, also known to be sufficient in dimension $d=2$~\cite{kingman1962imbedding}, and a time-independent rate matrix $L$ can then be found.

\begin{ex}[Thermalisation]
	\label{ex:classicalthermalization}
	Consider a two-level system with energy gap $E$ incoherently exchanging energy with a large environment at inverse temperature $\beta$. Assume $ \beta E =1$. Suppose one observes $P_{i|j}$ satisfying the detailed balance condition: \mbox{$P_{1|0} = P_{0|1}e^{-1}$}. Can this stochastic process originate from a memoryless dynamics? Using the condition in Eq.~\eqref{eq:necessaryembeddable} one can verify this is the case if and only if \mbox{$P_{0|1} \leq e/(1+e) \approx 0.731$}. The dynamics that realises $P$ is then a standard thermalisation process whereby the system's state $\v{p}$ exponentially relaxes to the equilibrium distribution $\v{\gamma} = (e/(1+e), 1/(1+e))$, according to the classical master equation:
	\begin{equation}
	\frac{d}{dt}\v{p}(t) = R( \v{\gamma} - \v{p}(t)),
	\end{equation} 
	where $R$ denotes the thermalisation rate. The stochastic process $P$ is realized by a partial thermalisation lasting for a time $t = - \log (1- \frac{1+e}{e} P_{0|1})/R$.  Intuitively the de-excitation probability $P_{0|1}$ cannot be made larger than $e/(1+e)$ because memoryless thermalisations need to satisfy detailed balance at every intermediate timestep (and so there is always some probability of absorbing an excitation from the bath).  In fact, $P_{0|1} > e/(1+e)$ can only be realized if memory effects are present. An example of such a process is a detailed balanced $P$ with $P_{0|1} = 1$, which is the ``$\beta$-swap'' providing enhanced cooling in Ref.~\cite{alhambra2019heat}. Indeed the latter process can be approximated by a Jaynes-Cummings coupling of a two-level system to an ``environment'' given by a single harmonic oscillator initialized in a thermal state, which gives rise to a highly non-Markovian evolution of the system.
\end{ex}


\subsection{Quantum embeddability}

A state of a finite-dimensional quantum system is given by a density operator $\rho$, i.e. a positive semi-definite operator with trace one that acts on a $d$-dimensional Hilbert space $\H_d$. A general evolution of a density matrix is described by a quantum channel $\E$, which is a completely positive trace-preserving map from the space of density matrices to itself. Now, focussing on the computational basis $\{\ket{k}\}_{k=1}^d$ of $\H_d$, suppose we input the quantum state $\rho_{\v{p}} = \sum_k p_k \ketbra{k}{k}$, apply the channel $\mathcal{E}$ and measure the resulting state $\E(\rho_{\v{p}})$ in the computational basis. The measurement outcomes will be distributed according to $P \v{p}$, where
\begin{align}
	\label{eq:qc_map}
	P_{i|j}=& \matrixel{i}{\mathcal{E}\left(\ketbra{j}{j}\right)}{i}.
\end{align}
In this way, the preparation of $\rho_{\v{p}}$, followed by a channel $\mathcal{E}$ and the computational basis measurement, simulates the action of a stochastic process $P$ on the classical state~$\v{p}$.  

Surprisingly, as far as we are aware the set of processes $P$ that can be simulated by a quantum process without employing memory has not been named or studied before. Hence, we define a stochastic matrix $P$ as \emph{quantum-embeddable} if it can be simulated by a quantum process as in Eq.~\eqref{eq:qc_map} with $\mathcal{E}$ a \emph{Markovian quantum channel}~\cite{wolf2008assessing}, i.e. a channel that can result from a Markovian master equation (the quantum analogue of Eq.~\eqref{eq:P}). 

Here, we describe what it rigorously means for a quantum channel $\mathcal{E}$ to be Markovian. Despite the difference in jargon between the two communities, Markovianity for channels is the quantum analogue of the classical notion of embeddability and captures the fact that $\mathcal{E}$ can be realized without employing memory effects. It can also be understood as a control problem, but this time involving a quantum master equation. More precisely, the rate matrix $L$ is replaced by  a \emph{Lindbladian}~\cite{gorini1976completely,lindblad1976generators}, which is a superoperator $\mathcal{L}$ acting on density operators and satisfying 
\begin{equation}
	\label{eq:lindbladian}
	\mathcal{L}(\cdot) =  -i[H,\cdot]+\Phi(\cdot) - \frac{1}{2}\{ \Phi^*(\iden),\cdot \},
\end{equation}
with the first term describing unitary evolution and the remaining ones encoding the dissipative dynamics, e.g., due to the interaction with an external environment. Here $H$ is a Hermitian operator, $[A,B]:= AB-BA$ denotes a commutator, $\Phi$ is a completely positive superoperator, $\Phi^*$ denotes the dual of $\Phi$ under the Hilbert-Schmidt scalar product, and $\{A,B\}:= AB+BA$ stands for the anticommutator. In analogy with Eq.~\eqref{eq:P}, a continuous one-parameter family of Lindbladians $\mathcal{L}(t)$ generates a family of quantum channels $\mathcal{E}(t)$ satisfying
\begin{equation}
	\frac{d}{dt}\hat{\mathcal{E}}(t) = \hat{\mathcal{L}}(t)\hat{\mathcal{E}}(t),\quad \hat{\mathcal{E}}(0)=\hat{\mathcal{I}},
\end{equation} 
where hats indicate superoperators (i.e., matrix representations of quantum channels that act on vectorised quantum states) and $\mathcal{I}$ denotes the identity channel. A quantum channel $\mathcal{E}$ is Markovian~\cite{wolf2008assessing} if $\mathcal{E} = \mathcal{E}(t_f)$ for some choice of the Lindbladian $\mathcal{L}(t)$ and $t_f$ (perhaps $t_f = +\infty$). In other words, $\mathcal{E}$ is Markovian if it is a channel that results from integrating a quantum master equation up to some time $t_f$. Any given Markovian channel $\mathcal{E}$ gives a stochastic process $P$ through Eq.~\eqref{eq:qc_map}. The aim of the control $\mathcal{L}(t)$ is to achieve a target stochastic matrix $P$ after some time $t_f$. More formally we introduce the following definition.

\begin{defn}[Quantum-embeddable stochastic matrix]
	A stochastic matrix $P$ is \emph{quantum-embeddable} if
	\begin{equation}
		\label{eq:q_embed}
		P_{i|j} =\bra{i}  \mathcal{E}\left(\ketbra{j}{j}\right)\ket{i},
	\end{equation}	
	where $\mathcal{E}$ is a Markovian quantum channel.
\end{defn}

\begin{ex}
	\label{ex:tpm}
	Consider a two-point projective measurement scheme (TPM)~\cite{campisi2016colloquium}. First, a projective energy measurement is performed and one finds the system in a well defined energy state $j$. Then, a quantum evolution $\mathcal{E}$ follows, and finally a second energy measurement returns the outcome $i$ with probability $P_{i|j}$. We can then ask whether it is possible that the process $\mathcal{E}$ generating $P_{i|j}$ resulted from a Markovian quantum master equation. Suppose one observes
	\begin{equation}
	\label{eq:unistochasticexample}
	P= \begin{bmatrix}
	1/3 & 2/3 \\
	2/3 & 1/3 
	\end{bmatrix}.
	\end{equation}
	Then, it is straightforward to show that the above can arise from the following unitary dynamics $U$ (which is a Markovian channel) 
	\begin{equation}
	\label{eq:unistochasticexample2}
	U= \begin{bmatrix}
	\sqrt{1/3} & \sqrt{2/3} \\
	\sqrt{2/3} & -\sqrt{1/3}
	\end{bmatrix},
	\end{equation}
	while according to Eq.~\eqref{eq:necessaryembeddable} it is impossible to generate such a $P$ using classical memoryless dynamics.
\end{ex}


\subsection{Quantum advantage}
\label{sec:examples}

One can easily see that all (classically) embeddable stochastic processes are also quantum-embeddable: given a classical generator $L$ one chooses the CP map $\Phi$ defining the Lindbladian $\L$ in Eq.~\eqref{eq:lindbladian} to be
\begin{equation}
	\Phi(\cdot)=\sum_{ij} K_{ij}(\cdot) K_{ij}^\dagger,\quad K_{ij}=\sqrt{L_{i|j}} \ketbra{i}{j}.
\end{equation}
However, the converse is not true. There exist many stochastic matrices $P$ that can be generated by a quantum, but not a classical Markov process. The simplest example is given by a non-trivial permutation $\Pi$, satisfying 
\begin{equation}
	\det \Pi = \pm 1,\quad \prod_i \Pi_{i|i} = 0.
\end{equation}
Clearly, Eq.~\eqref{eq:necessaryembeddable} is violated and hence $\Pi$ is not embeddable. However, noting that every unitary channel $U(\cdot)U^\dagger$ is Markovian (by choosing the Lindbladian with no dissipative part and $H$ such that $U=\exp(iHt_f)$), and that a permutation matrix $\Pi$ is unitary, we conclude that every permutation $\Pi$ is quantum-embeddable. This conclusion also proves that neither of the two conditions in Eq.~\eqref{eq:necessaryembeddable} are necessary for quantum-embeddability. 

More generally, a larger class of stochastic matrices that are quantum-embeddable is given by the set of \emph{unistochastic} matrices~\cite{bengtsson2004importance,bengtsson2005birkhoff}. These are defined as all stochastic matrices $P$ satisfying 
\begin{equation}
	P_{i|j} = |\bra{j}U\ket{i}|^2
\end{equation}
for some unitary matrix $U$, and the argument for quantum embeddability is analogous to the one given for permutation matrices. The set of unistochastic matrices includes permutations, but also other (classically) non-embeddable stochastic matrices. As an example one can consider the bistochastic matrix $P$ in Eq.~\eqref{eq:unistochasticexample} or, in fact, any other $2 \times 2$ bistochastic matrix. This is because in dimension $d=2$ every bistochastic matrix is unistochastic and so it is quantum-embeddable.

Beyond these examples we prove a simple general result that allows one to find larger families of quantum embeddable stochastic matrices.

\begin{lem}[Monoid property]
	\label{lem:composition}
	The set of quantum-embeddable stochastic matrices contains identity and is closed under composition, i.e., if $P$ and $Q$ are quantum-embeddable, then also $PQ$ is.
\end{lem}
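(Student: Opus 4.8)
The plan is to realise the composed process $PQ$ by inserting a completely dephasing channel between the Markovian channels that realise $P$ and $Q$. The identity case is immediate: the identity channel $\mathcal{I}$ is Markovian (integrate the trivial master equation with $\mathcal{L}\equiv 0$) and $\bra{i}\mathcal{I}(\ketbra{j}{j})\ket{i}=\delta_{ij}$, so the identity stochastic matrix is quantum-embeddable.

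For closure under composition, let $\mathcal{E}_P$ and $\mathcal{E}_Q$ be Markovian channels with $P_{i|j}=\bra{i}\mathcal{E}_P(\ketbra{j}{j})\ket{i}$ and $Q_{i|j}=\bra{i}\mathcal{E}_Q(\ketbra{j}{j})\ket{i}$, and let $\mathcal{D}(\cdot)=\sum_{k}\ketbra{k}{k}(\cdot)\ketbra{k}{k}$ be the channel that fully decoheres in the computational basis. I claim $\mathcal{E}:=\mathcal{E}_P\circ\mathcal{D}\circ\mathcal{E}_Q$ realises $PQ$ in the sense of Eq.~\eqref{eq:q_embed}. Indeed, since $\bra{k}\mathcal{E}_Q(\ketbra{j}{j})\ket{k}=Q_{k|j}$ we have $\mathcal{D}(\mathcal{E}_Q(\ketbra{j}{j}))=\sum_{k}Q_{k|j}\ketbra{k}{k}$, and therefore
\begin{equation}
	\bra{i}\mathcal{E}(\ketbra{j}{j})\ket{i}=\sum_{k}Q_{k|j}\,\bra{i}\mathcal{E}_P(\ketbra{k}{k})\ket{i}=\sum_{k}P_{i|k}Q_{k|j}=(PQ)_{i|j}.
\end{equation}
The role of $\mathcal{D}$ is essential here: it erases the off-diagonal coherences generated by $\mathcal{E}_Q$, which would otherwise feed into $\bra{i}\mathcal{E}_P(\cdot)\ket{i}$ and spoil the classical matrix product.

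It remains to check that $\mathcal{E}$ is Markovian. First, $\mathcal{D}$ itself is Markovian: taking $H=0$ and $\Phi(\cdot)=\sum_{k}\ketbra{k}{k}(\cdot)\ketbra{k}{k}$ in Eq.~\eqref{eq:lindbladian} gives $\Phi^{*}(\iden)=\iden$ and hence the valid Lindbladian $\mathcal{L}_{\mathrm{deph}}(\cdot)=\sum_{k}\ketbra{k}{k}(\cdot)\ketbra{k}{k}-(\cdot)$, whose semigroup $e^{t\mathcal{L}_{\mathrm{deph}}}$ leaves diagonal entries invariant and damps off-diagonal ones like $e^{-t}$, so that $\mathcal{D}=\lim_{t\to\infty}e^{t\mathcal{L}_{\mathrm{deph}}}$. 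Then $\mathcal{E}=\mathcal{E}_P\circ\mathcal{D}\circ\mathcal{E}_Q$ is obtained by concatenating control protocols: one runs the time-dependent Lindbladian generating $\mathcal{E}_Q$, then $\mathcal{L}_{\mathrm{deph}}$, then the Lindbladian generating $\mathcal{E}_P$, and the solution of the master equation for this concatenated family (mollified at the junction times if a strictly continuous $\mathcal{L}(t)$ is desired) is exactly the composition of the three channels.

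The only subtlety — and, I expect, the only real obstacle — is bookkeeping when one or more of $\mathcal{E}_Q$, $\mathcal{D}$, $\mathcal{E}_P$ is merely limit-embeddable ($t_f=+\infty$), so the protocols cannot naively be run back-to-back in finite time. This is handled by reparametrising time so that each (possibly infinite) sub-protocol is compressed into a finite window: on $[0,1)$ run a sped-up copy of the $\mathcal{E}_Q$-protocol and set $\mathcal{E}(1):=\lim_{t\to 1^-}\mathcal{E}(t)=\mathcal{E}_Q$, on $[1,2)$ a sped-up dephasing and set $\mathcal{E}(2):=\mathcal{D}\circ\mathcal{E}_Q$, on $[2,3)$ a sped-up copy of the $\mathcal{E}_P$-protocol, at the cost of a $\mathcal{L}(t)$ that is unbounded but pointwise a valid Lindbladian, which still gives a well-defined propagator; alternatively one truncates each sub-protocol and the dephasing time, notes that each truncated concatenation is a bona fide finite-time Markovian channel, and invokes continuity of channel composition to pass to the limit $\mathcal{E}_P\circ\mathcal{D}\circ\mathcal{E}_Q$. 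Either way, $\mathcal{E}$ is Markovian and realises $PQ$, completing the proof.
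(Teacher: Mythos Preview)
Your proof is correct and follows exactly the same approach as the paper: realise the identity via the trivial Lindbladian, and realise $PQ$ via the composition $\mathcal{E}_P\circ\mathcal{D}\circ\mathcal{E}_Q$, using that the completely dephasing channel $\mathcal{D}$ is Markovian and that Markovian channels are closed under composition. You supply more detail than the paper does (the explicit dephasing Lindbladian, the verification that $\mathcal{E}$ realises $(PQ)_{i|j}$, and the time-reparametrisation bookkeeping for the limit-embeddable case), but the argument is the same.
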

\begin{proof}
First, identity is obviously quantum-embeddable as it arises from a trivial Lindbladian $\L=0$. Now, note that the composition of Markovian quantum channels gives a Markovian quantum channel. Next, notice that a completely dephasing map
\begin{equation}
	\D(\cdot):=\sum_k \matrixel{k}{\cdot}{k} \ketbra{k}{k}
\end{equation}
is a Markovian quantum channel. Finally, the composition $\mathcal{E} = \E_P\circ\D\circ \E_Q$, with $\E_P$ and $\E_Q$ being quantum channels describing the quantum embeddings of $P$ and~$Q$, is a Markovian quantum channel which quantum-embeds the stochastic process described by $PQ$. 
\end{proof}

Let us now discuss the consequences of Lemma~\ref{lem:composition} with increasing generality. We start with the following corollary for dimension $d=2$.

\begin{cor}
	\label{cor:2x2}
	All $2\times 2$ stochastic matrices are quantum embeddable.
\end{cor}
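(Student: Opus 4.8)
The plan is to reduce an arbitrary $2\times 2$ stochastic matrix to one of the two situations already handled in the text. Write a general $2\times 2$ stochastic matrix as $P=\begin{bmatrix}1-a & b\\ a & 1-b\end{bmatrix}$ with $a,b\in[0,1]$; a one-line computation gives $\det P = 1-a-b$. I would then split into two cases according to the sign of $\det P$.

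If $\det P\geq 0$, then by the sufficiency of the condition $\det P\geq 0$ for embeddability in dimension $d=2$~\cite{kingman1962imbedding}, $P$ is (classically) embeddable, and since every classically embeddable stochastic matrix is quantum-embeddable (as noted at the beginning of Sec.~\ref{sec:examples}), we are done in this case.

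If instead $\det P<0$, I would precompose with the bit-swap permutation $\sigma_x=\begin{bmatrix}0&1\\1&0\end{bmatrix}$. The matrix $\sigma_x P$ is still stochastic (swapping the two rows of a stochastic matrix preserves nonnegativity of entries and normalisation of columns), and $\det(\sigma_x P)=-\det P>0$, so by the previous case $\sigma_x P$ is quantum-embeddable. Since $\sigma_x$ is unitary it is quantum-embeddable, and Lemma~\ref{lem:composition} applied to the pair $\sigma_x$ and $\sigma_x P$ yields that $P=\sigma_x(\sigma_x P)$ is quantum-embeddable, completing the argument.

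There is essentially no hard step here: the only things to verify are the elementary facts that $\sigma_x P$ is again a valid stochastic matrix and that $\det(\sigma_x P)$ has the opposite sign, together with applying the monoid property in the correct order (expressing $P$ as $\sigma_x$ composed with the embeddable matrix $\sigma_x P$, not the reverse). Equivalently one could phrase the conclusion as: every $2\times 2$ stochastic matrix is either embeddable or the composition of a swap with an embeddable matrix. If one wished to avoid invoking~\cite{kingman1962imbedding}, an alternative would be to use that every $2\times 2$ bistochastic matrix is unistochastic, hence quantum-embeddable, and try to factor a general $2\times 2$ stochastic matrix through a bistochastic one; but the determinant-based reduction above is the cleanest route.
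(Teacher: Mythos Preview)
Your proof is correct and follows essentially the same approach as the paper: split on the sign of $\det P$, use classical embeddability when $\det P\geq 0$, and otherwise factor $P=\sigma_x(\sigma_x P)$ with $\sigma_x P$ embeddable, then apply Lemma~\ref{lem:composition}. The paper phrases the second case as $P=\Pi P'$ with $\det P'\geq 0$, which is exactly your decomposition since $P'=\Pi P$.
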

\begin{proof}
A general $2\times 2$ stochastic matrix $P$ can be written as 
\begin{equation}
	P= \begin{bmatrix}
	a & 1-b \\
	1-a & b 
\end{bmatrix}.
\end{equation}
If $\det P \geq 0$, then $P$ is embeddable and hence quantum embeddable. Otherwise, if $\det P < 0$, we can write $P=\Pi P'$ with $\Pi$ denoting the non-trivial $2\times 2$ permutation and $P'$ being a stochastic matrix with $\det P' \geq 0$. Since $P$ can be written as a composition of two quantum embeddable maps, by Lemma~\ref{lem:composition} it is also quantum embeddable.
\end{proof}

\begin{figure}[t]
	\includegraphics[width=0.67\columnwidth]{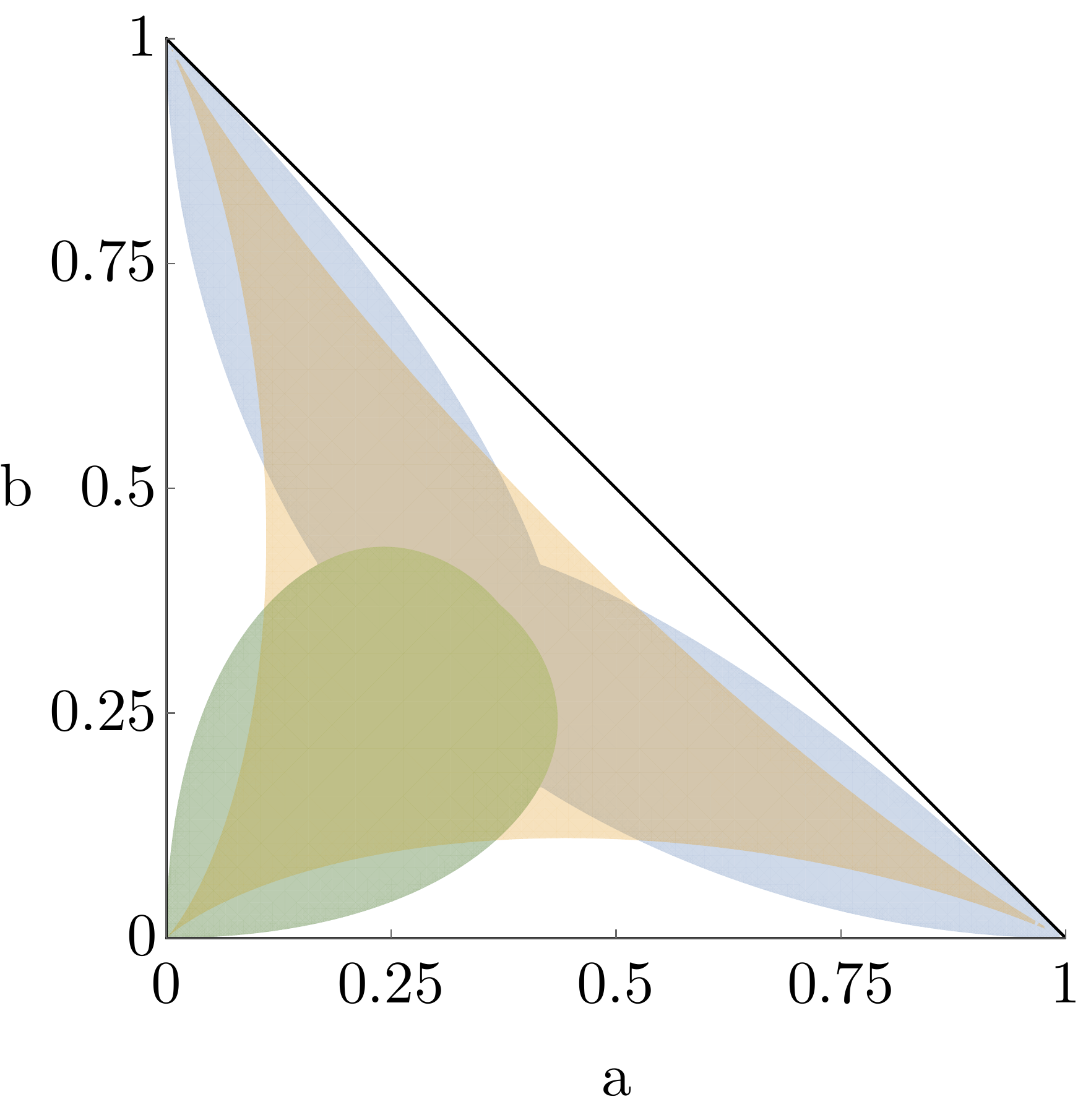}
	\caption{\label{fig:3x3_regions}\textbf{Embeddability of $3\times 3$ circulant matrices.} Every $3\times 3$ circulant matrix corresponds to a point within a half-square in the parameter space $[a,b]$ according to Eq.~\eqref{eq:circulant}. The green petal-shaped region around the origin contains all (classically) embeddable matrices. The set of quantum-embeddable matrices is larger and contains the orange triangle-like region of unistochastic matrices, as well as two blue petal-like regions corresponding to permutations of the classically embeddable region.}
\end{figure}

For $d\geq 3$ comparing quantum and classical embeddability becomes complicated due to the lack of a complete characterisation of classical embeddability. One can, however, focus on certain subclasses of stochastic processes that are better understood. For example, for the family of $3\times 3$ \emph{circulant stochastic matrices}, defined by
\begin{align}
	\label{eq:circulant}
	P= \begin{bmatrix}
	1-a-b & a & b \\
	b & 1-a-b & a \\
	a & b & 1-a-b
	\end{bmatrix},
\end{align}
the necessary and sufficient conditions for (classical) embeddability are known. Denoting the eigenvalues of $P$ by $\lambda_k = r_k e^{i \theta_k}$ with $\theta_k \in [-\pi,\pi]$, these conditions are given by~\cite{lencastre2016empirical}
\begin{equation}
	\forall k:~r_k \leq \exp \left[-\theta_k \tan(\pi/3) \right].
\end{equation}
We illustrate the set of classically embeddable circulant matrices by the green region in parameter space $[a,b]$ in Fig.~\ref{fig:3x3_regions}. On the other hand, due to Lemma~\ref{lem:composition}, quantum embeddable circulant stochastic matrices also include permutations of $P$, i.e., $\Pi P$ with $\Pi$ denoting circulant $3\times 3$ permutation matrices. In fact, this set contains not only permutations of $P$ but also compositions of $P$ with any unistochastic matrix; however, numerical verification suggests that this does not further expand the investigated set. As a result, the set of quantum embeddable stochastic matrices in the parameter space $[a,b]$ contains not only the region corresponding to classically embeddable matrices, but also its two copies (corresponding to two permutations), which we illustrate in blue in Fig.~\ref{fig:3x3_regions}. Moreover, all unistochastic circulant matrices (which are fully characterised by the ``chain-links'' conditions from Ref.~\cite{bengtsson2005birkhoff}) are also quantum embeddable. The resulting region is also plotted in Fig.~\ref{fig:3x3_regions} in orange. We thus clearly see that the set of quantum-embeddable stochastic circulant matrices is much larger than the classically embeddable one, since it contains the union of green, blue and orange regions. However, we do not expect that all $3\times 3$ circulant matrices are quantum embeddable, with the case $a=b=1/2$ being intuitively the least likely to arise from quantum Markovian dynamics. 

For general dimension $d$ one can generate families of quantum-embeddable matrices using Lemma~\ref{lem:composition} in an analogous way, by composing classically embeddable matrices with unistochastic ones. Moreover, employing Corollary~\ref{cor:2x2} we note that the set of quantum embeddable stochastic matrices also includes all matrices $P$ that can be written as products of \emph{elementary stochastic matrices} $P_{e_i}$ (also known as \emph{pinching matrices}), i.e., 
\begin{equation}
	P= P_{e_n} \dots P_{e_1}, \quad P_{e_i} = \Pi_i (P_2 \oplus I_{d-2}) \Pi_i,
\end{equation}
where $P_2$ is a general $2\times 2$ stochastic matrix, $I_{d-2}$ denotes identity on the remaining states, and $\Pi_i$ is an arbitrary permutation. Notably, for $d \geq 4$ such products of elementary matrices contain matrices that are not unistochastic~\cite{poon1987inclusion} and hence cannot be reduced to the examples above. In conclusion, quantum embeddings allow one to achieve many stochastic processes that necessarily require memory from a classical standpoint.


\subsection{Discussion} 

From a physical perspective, it is now natural to ask: why the set of quantum embeddable stochastic matrices is strictly larger than the set of classically embeddable ones? To address this question, let us first consider the simple example of classically non-embeddable permutation matrices,
\begin{align}
	\Pi_{m} = \sum_{n=1}^d \ketbra{n \oplus m}{n}, \quad m = 1,\dots,d-1,
\end{align}
where $\oplus$ here denotes addition modulo $d$. A direct calculation shows that $\Pi_{m} = e^{i H m}$ with Hamiltonian 
\begin{equation}
	H = \sum_{n=1}^d \frac{2\pi (n-1)}{d} \ketbra{\psi_n}{\psi_n},
\end{equation}
and
\begin{equation}
	\ket{\psi_n}=\frac{1}{\sqrt{d}}\sum_{k=1}^d e^{-i2\pi (k-1)(n-1)/d} \ket{k}.
\end{equation}
We thus see that the continuous and memoryless Hamiltonian evolution creates a superposition of classical states $\ket{n}$ on the way between identity and $\Pi_m$. The intuitive picture that emerges is that the quantum superposition between classical states created  during the evolution effectively acts as a memory. For example, when we perform a rotation of the Bloch sphere around the $y$ axis, we can implement a bit-swap sending $\ket{0}$ to $\ket{1}$ and vice versa, but the path the state follows (going through $\ket{+}$ if the initial state was $\ket{0}$, and through $\ket{-}$ if the initial state was $\ket{1}$) will preserve the memory about the initial state. At the same time, a classical memoryless process moving $(1,0)$ towards $(0,1)$ and $(0,1)$ towards $(1,0)$ cannot proceed beyond the point at which the two trajectories meet. 

One might wonder if we can quantify the coherent resources required for the advantage. There are various frameworks that have been put forward to quantify superposition (the resource	theory of coherence in its various forms~\cite{streltsov2017colloquium} or that of asymmetry~\cite{marvian2016quantify}). However, none of them associates costs to permutations (technically, these are ``free operations'') despite the fact that they carry an advantage in our setting. What we allude to in the present discussion is that, while these theories assign no cost to these operations, they can be performed in a Markovian fashion only because one can continuously connect different basis states through the creation of superpositions. Thus, current frameworks seem inadequate to capture the resources involved in the quantum memory advantage. An alternative framework would have to quantify the maximum amount of coherence that must be created at \emph{intermediate times}, minimised over all Markovian realisations of a target channel. This quantity may then be given an operational meaning in terms of minimal coherent resources one must input to realise the corresponding protocol. We leave this research direction for future work.

Our results can also be naturally connected to a result by Montina~\cite{montina2008exponential}, who proved that Markovian hidden variable models reproducing quantum mechanical predictions necessarily require a number of continuous variables that grows linearly with the Hilbert space dimension, and hence exponentially with the system size. Intuitively, here we are showing that this ``excess baggage''~\cite{hardy2004quantum} can be exploited to simulate memory effects. In what follows we provide a quantification of the advantage beyond the embeddable/non-embeddable dichotomy. We will see that quantum theory allows for advantages in the simulation of stochastic processes by memoryless dynamics.    


\section{Space-time cost of a stochastic process}
\label{sec:space_time}

In this section we first recall a recently introduced framework for the quantification of the space and time costs of simulating a stochastic process by memoryless dynamics~\cite{wolpert2019space}. We then extend it to the quantum domain and prove a quantum advantage in the corresponding costs.

\subsection{Classical space-time cost}

Let $P$ be a non-embeddable stochastic matrix acting on $d$ so-called \emph{visible} states. We then want to ask: how many additional \emph{memory} states $m$ does one need to add, in order to implement $P$ by a classical Markov process? Formally, one looks for an embeddable stochastic matrix $Q$ acting on $d+m$ states whose restriction to the first $d$ rows and columns is identical to $P$. When this happens, $Q$ is said to \emph{implement} $P$ with $m$ memory states. In fact, given any $d$-dimensional distribution $\v{p}$, if we take the $d+m$ dimensional distribution $\v{q} = (\v{p}, 0 \dots 0)$, then $Q \v{q} = (P\v{p},0,\dots 0)$. Following Ref.~\cite{wolpert2019space} we now have the following.

\begin{defn}[Space cost]
	\label{def:space_cost}
	The \emph{space cost} of a $d \times d$ stochastic matrix $P$, denoted $C_{\rm space}(P)$, is the minimum $m$ such that a $(d+m) \times (d+m)$ embeddable matrix $Q$ implements $P$.
\end{defn}

\noindent As a technical comment we note that the above definition can be extended to situations in which visible and memory states are not disjoint, e.g., when the visible states on which $P$ acts are logical states defined by a coarse graining of the states on which $Q$ acts. Since this does not change any of the results presented here, we refer to Ref.~\cite{wolpert2019space} for further details and adopt the simpler definition given here. 

Once we find a matrix $Q$ that implements $P$, the next question is: what is the number of time-steps required to realise $Q$? The notion of a time-step is meant to capture the number of independent controls that are needed to achieve $Q$. A natural definition would be that the number of time-steps necessary to realise an embeddable stochastic matrix $Q$ is the minimum number $n$ such that
\begin{equation}
	\label{eq:sequencecontrols}
	Q =  e^{L^{(n)} t_{n}} \cdots e^{L^{(1)} t_{1}},
\end{equation}
where $L^{(1)}, \dots, L^{(n)}$ are time-independent generators, i.e. each $L^{(k)}$ is a control applied for some time $t_{k}$. This captures the idea of a sequence of autonomous steps, with $n$ the number of times an active intervention is required to ``quench'' $L^{(i)}$ to $L^{(i+1)}$. If $n=1$ one has an almost autonomous protocol, i.e. the only requirement is the ability to switch off the controls after time~$t_1$.
	
From a physical point of view, the issue with this definition is that it assigns an infinite cost to any realistic protocol in which controls are switched on and off in a continuous fashion. For example, suppose a single two-level system is kept in contact with an idealized dissipative environment while we slowly tune its energy gap. Such a protocol can be seen as the continuous limit of a sequence of steps described in Eq.~\eqref{eq:sequencecontrols}. As such, it would be assigned an infinite time cost according to the above definition, even though it is certainly experimentally feasible. To overcome this issue, note that by Levy's lemma~\cite{freedman1983approximating}, a crucial property of each step in the sequence of Eq.~\eqref{eq:sequencecontrols} is that the set of non-zero transition probabilities does not change. This naturally suggests to define the time cost as the number of times the set of ``blocked'' transitions changes. One could physically motivated this definition as follows. Consider a system interacting with a large thermal environment. Transitions between any pairs of system's energy levels are possible by absorption or emission of the corresponding energy from or to the bath. Absorptions are exponentially suppressed in the energy gap. To selectively couple only certain levels we either need to raise and lower infinite energy barriers~\cite{wolpert2019space}, or we need to engineer the spectrum of the environment so that only certain transitions can occur. Changing the set of system's energy levels involved in the interaction, by decoupling some and coupling new ones, is then a non-trivial control operation and we hence assign a cost to it. The time cost, defined as the number of times we need to change the set of coupled energy levels, is then a good proxy for the level of required control. It solves the issue with the previous definition and, in particular, it assigns a cost $n=1$ to the qubit protocol mentioned above.

To sum up, these considerations lead to the following definition of a one-step process~\cite{wolpert2019space}:

\begin{defn}[One-step process]
	A stochastic matrix $T$ is called \emph{one-step} if
	\begin{enumerate}
	\item it is embeddable;
	\item the controls $L(t)$ that generate $T$ at time $t_f$ through Eq.~\eqref{eq:P}  can be chosen such that the set of non-zero transition probabilities of $P(t)$ is the same for all $t \in (0,t_f)$.
	\end{enumerate}
\end{defn}

Putting all this together we obtain the notion of time cost from Ref.~\cite{wolpert2019space}:

\begin{defn}[Time cost]
	\label{def:time_cost}
	The \emph{time cost} $C_{\rm time}(P,m)$ of a $d \times d$ stochastic matrix $P$, while allowing for $m$ memory states, is the minimum number $\tau$ of one-step stochastic matrices $T^{(i)}$ of dimension $(d+m) \times (d+m)$ such that $Q =  T^{(\tau)} \cdots  T^{(1)}$
	implements $P$.
\end{defn}


\subsection{Quantum space-time cost}

The framework presented above allows one to quantify the memory and time costs of implementing a given stochastic process by classical master equations. We now introduce a natural extension of the above to the quantum domain.

\begin{defn}[Quantum space cost]
	The \emph{quantum space cost} of a $d \times d$ stochastic matrix $P$, denoted $Q_{\rm space}(P)$, is the minimum $m$ such that the $(d+m) \times (d+m)$ quantum embeddable matrix $Q$ implements $P$.
\end{defn}

Concerning the definition of the quantum time cost of a stochastic matrix, the physical intuitions discussed in the classical case apply essentially unchanged. We can now add to those intuitions the standard example of a sequence of $n$ laser pulses, each involving a fixed energy submanifold. Naturally, these will count as $n$ time-steps. In analogy with the classical counterpart, the quantum time cost admits more or less restrictive definitions. Since we eventually want to prove a quantum advantage, it is convenient to adopt in the quantum regime the simpler and more constrained definition. Thus, we we will use the definition involving time-independent generators only, as an advantage proven according to such a restricted scenario will persist if we allow even more freedom in the quantum protocol.

\begin{defn}[Quantum time cost]
	\label{def:quantum_time}
	The \emph{quantum time cost} $Q_{\rm time}(P,m)$ of a $d\times d$ stochastic matrix $P$, while allowing for $m$ memory states, is the minimum $\tau$ such that there exist time-independent Lindbladians $\mathcal{L}_1, \dots \mathcal{L}_\tau$ on a $(d+m)\times(d+m)$ dimensional Hilbert space and
	\begin{equation}
		Q_{i|j} = \bra{i}e^{\mathcal{L}^{(\tau)} t_\tau} \cdots e^{\mathcal{L}^{(1)} t_1}(\ketbra{j}{j})\ket{i}
	\end{equation}
	implements $P$.
\end{defn}

At this point it is worth highlighting that this notion of time-step, which is a quantum version of that in Ref.~\cite{wolpert2019space}, is disconnected from the one used in quantum computing. There, one takes the system to consist of a number of qubits and the operations are constrained by the fact that they involve only a few qubits at once. Here, instead, we allow for transformations involving all energy levels at once. The basic disconnect is due to the fact that elementary operations in a computational setting typically do not involve dissipative dynamics, which is however the main subject of discussion here (for a unitary it is not natural to act on all energy levels at once, but for a thermalisation process with a collective bath it can be). Hence, it is not obvious to what extent one can reconcile the two approaches. In this work we will adopt ours as working definitions which carry their own physical intuitions and have the advantage of allowing a clear comparison with the classical work in Ref.~\cite{wolpert2019space}. Nonetheless, we believe an approach which captures more directly computational restrictions in the simulation of stochastic processing using both elementary gates and dissipative interactions would be extremely interesting.

Moving on, the central question in the classical setting is to find $C_{\rm space}(P)$ and then characterise $C_{\rm time}(P,m)$ for \mbox{$m\geq C_{\rm space}(P)$}. The main result of Ref.~\cite{wolpert2019space} was to solve this problem for stochastic matrices $P$ that are \mbox{$\{0,1\}$-valued} or, in other words, represent a function $f$ over the set of states $\{1,\dots,d\}$. How do these results compare with what can be done quantum mechanically? In the next section we give a protocol realising every \mbox{$\{0,1\}$-valued} stochastic matrix that scales much better than the corresponding minimal classical cost.


\subsection{Quantum advantage} 

Let $P_f$ be a $\{0,1\}$-valued $d\times d$ stochastic matrix defined by a function $f:~\mathbb{Z}_d \rightarrow \mathbb{Z}_d$. Let ${\rm fix}(f)$ be the number of fixed points of $f$, $|{\rm img}(f)|$ the dimension of the image of $f$ and ${\rm c}(f)$ the number of cycles of $f$, i.e. the number of distinct orbits of elements of $\{1,\dots,d\}$ of the form $\{i,f(i),f (f(i)), \dots, i\}$. Recently the following result has been shown.

\begin{thm}[Classical cost of a function~\cite{wolpert2019space}]
	\label{thm:trade-off}
	The time cost of a $\{0,1\}$-valued stochastic matrix $P_f$ described by a function $f$ is given by
	\begin{align}
		&C_{\rm time}(P_f,m)\nonumber\\
		 &\quad = \left\lceil \frac{m + d + {\rm max}[{\rm c} (f) - m,0] - {\rm fix}(f)}{m + d - |{\rm img}(f)|} \right\rceil + b_f(m) \nonumber \\
	& \quad\geq \left\lceil \frac{m + d  - {\rm fix}(f)}{m + d - |{\rm img}(f)|} \right\rceil,
	\end{align}
	where $b_f(m) = 0$ or $1$ and $\lceil \cdot \rceil$ is the ceiling function.
\end{thm}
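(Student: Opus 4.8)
\smallskip
\noindent\textit{Proof strategy.}
The plan is to convert the optimisation over one-step processes into a purely combinatorial factorisation problem and then handle the upper and lower bounds separately. The first step would be to prove a structural lemma: among $\{0,1\}$-valued stochastic matrices, the one-step processes on $D:=d+m$ states are exactly the idempotent functions (retractions) $g$ with $g\circ g=g$, equivalently those whose functional graph is a forest of in-stars rooted at fixed points. The reason is that a column of a one-step process which moves all of its mass can only do so in the limit $t_f\to\infty$ (at any finite time the diagonal entry stays strictly positive), and the requirement that the set of non-zero transitions be constant on $(0,t_f)$ forces the underlying rate digraph to flow every state to a single absorbing — hence fixed — state. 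More generally, an arbitrary one-step process has support equal to the reflexive-transitive closure of some digraph, and for the purpose of reproducing a $\{0,1\}$-valued target one may restrict to compositions of the idempotents above; genuinely probabilistic one-step processes do not help, which needs a separate but elementary monotonicity argument. With this reduction, $C_{\rm time}(P_f,m)$ becomes the minimal number $\tau$ of idempotent functions $g_1,\dots,g_\tau$ on $\{1,\dots,D\}$ whose composition $Q=g_\tau\circ\cdots\circ g_1$ agrees with $P_f$ on the first $d$ rows and columns, the action on the $m$ memory columns being free.

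\smallskip
\noindent For the upper bound I would give an explicit schedule. Decompose the functional graph of $f$ into its recurrent part — a permutation $\sigma$ with ${\rm c}(f)$ cycles, of which ${\rm fix}(f)$ are trivial — and its transient part, a forest of in-trees rooted on the cycles. Collapsing the transient forest proceeds from the leaves inward: each idempotent step may freely re-route any collection of states onto states held fixed during that step, but routing a state towards its final image while that image is itself still moving requires temporarily parking it on a \emph{spare} state, i.e.\ a state outside ${\rm img}(f)$. There are $m+d-|{\rm img}(f)|$ such spares, which is the number of states one can advance per step and hence the denominator of the bound. Realising the permutation $\sigma$ uses a ``rotate through a spare'' manoeuvre for each non-trivial cycle, and up to $m$ of these can run in parallel using the memory states; the excess $\max[{\rm c}(f)-m,0]$ must be threaded through the pipeline and contributes to the numerator, while trivial cycles cost nothing, accounting for the $-{\rm fix}(f)$. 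Putting the two ingredients together, plus a book-keeping of whether the cyclic part closes within the last pipeline step or needs one more, produces the ceiling expression and the $0/1$ term $b_f(m)$, the latter characterised by an explicit residue condition.

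\smallskip
\noindent For the lower bound I would run a potential argument. Define a potential $\Phi$ on stochastic matrices on $D$ states counting, for the partial product $Q_k=g_k\circ\cdots\circ g_1$, the number of visible inputs not yet sitting at their final target, together with a correction for unresolved cyclic structure, and show: (i) $\Phi$ can decrease by at most $m+d-|{\rm img}(f)|$ under composition with any single one-step process, since such a process can only consolidate mass into eventually-absorbing states, of which only the out-of-image ones are available as fresh landing sites at a given step; (ii) non-trivial cycles carry an irreducible overhead whenever $m<{\rm c}(f)$, because fewer than the required number of cycles can be opened simultaneously. Starting from the value of $\Phi$ at $P_f$ and ending at $0$ yields $\tau\ge$ the stated expression; the displayed inequality $\tau\ge\lceil (m+d-{\rm fix}(f))/(m+d-|{\rm img}(f)|)\rceil$ is then immediate upon discarding the non-negative terms $\max[{\rm c}(f)-m,0]$ and $b_f(m)$.

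\smallskip
\noindent The main obstacle is matching the upper and lower bounds exactly rather than up to an additive constant: the transient-collapse pipeline and the cycle-rotation manoeuvres compete for the same pool of $m+d-|{\rm img}(f)|$ spare states, so the optimal schedule interleaves them, and pinning down the resulting ceiling together with the exact value of $b_f(m)$ demands a careful count of residues modulo $m+d-|{\rm img}(f)|$. A secondary difficulty, needed for the lower bound in full generality, is ruling out that probabilistic one-step processes help: intuitively a zero-error target leaves no slack for intermediate randomness to be created and then perfectly undone, but this must be justified from the reachability-closure structure of one-step processes, e.g.\ via rank and support monotonicity.
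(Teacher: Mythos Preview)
The paper does not prove this theorem. Theorem~\ref{thm:trade-off} is quoted verbatim from Ref.~\cite{wolpert2019space} (the sentence introducing it reads ``Recently the following result has been shown''), and the paper provides no proof or proof sketch of its own; it simply uses the statement as a classical benchmark against which the quantum result (Theorem~\ref{thm:q_trade-off}) is compared. There is therefore nothing in the present paper to compare your proposal to.

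That said, your strategy is broadly in line with how the result is actually obtained in Ref.~\cite{wolpert2019space}: the reduction of one-step $\{0,1\}$-valued processes to idempotent retractions, the upper bound via an explicit pipeline that routes transients through ``spare'' (out-of-image) states while rotating cycles through memory states, and a lower bound via a counting/potential argument bounding how many states can be advanced per step by $m+d-|{\rm img}(f)|$. Your identification of the two genuine subtleties --- the exact interleaving of cycle-opening and transient-collapse that determines $b_f(m)$, and the need to rule out any benefit from genuinely probabilistic one-step processes --- is accurate; both are handled in Ref.~\cite{wolpert2019space} essentially along the lines you sketch, with the second point following from a support/zero-pattern monotonicity argument. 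If your goal is to reproduce the proof, you should consult that reference directly rather than the present paper.
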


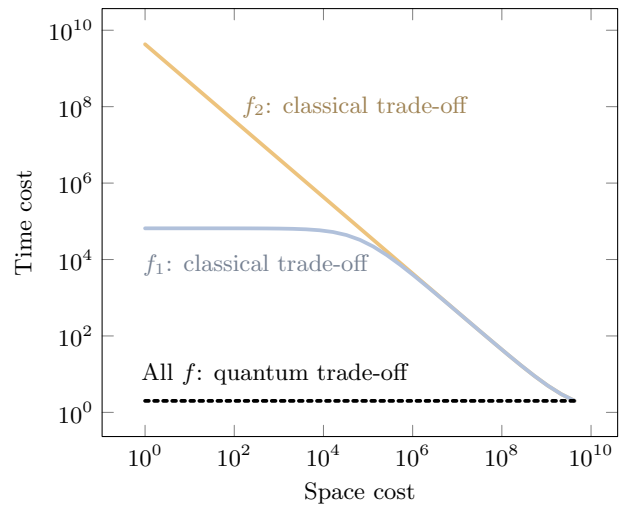
\begin{figure}[t]
	\centering
	\begin{tikzpicture}[line cap=round,line join=round,x=6.0cm,y=6.0cm]
		\begin{loglogaxis}[xlabel=Space cost,ylabel=Time cost]
		\addplot[color=tikzOrange,line width=0.5mm] coordinates {
			(1., 4.29497*10^9) (2., 2.14748*10^9) (4., 1.07374*10^9) (8., 5.36871*10^8) (16., 
			2.68435*10^8) (32., 1.34218*10^8) (64., 6.71089*10^7) (128., 
			3.35544*10^7) (256., 1.67772*10^7) (512., 8.38861*10^6) (1024., 
			4.19431*10^6) (2048., 2.09715*10^6) (4096., 1.04858*10^6) (8192.,
			524289.) (16384., 262145.) (32768., 131073.) (65536., 
			65537.) (131072., 32769.) (262144., 16385.) (524288., 
			8193.) (1.04858*10^6, 4097.) (2.09715*10^6, 2049.) (4.1943*10^6, 
			1025.) (8.38861*10^6, 513.) (1.67772*10^7, 257.) (3.35544*10^7, 
			129.) (6.71089*10^7, 65.) (1.34218*10^8, 33.) (2.68435*10^8, 
			17.) (5.36871*10^8, 9.) (1.07374*10^9, 5.) (2.14748*10^9, 
			3.) (4.29497*10^9, 2.)
		};
		\addplot[color=tikzBlue,line width=0.5mm] coordinates {	
			(1., 65535.) (2., 65534.) (4., 65532.) (8., 65528.) (16., 65520.) (32., 
			65504.) (64., 65472.1) (128., 65408.3) (256., 65281.) (512., 
			65028.) (1024., 64527.8) (2048., 63550.1) (4096., 
			61681.) (8192., 58254.3) (16384., 52429.) (32768., 
			43691.) (65536., 32768.5) (131072., 21846.) (262144., 
			13108.) (524288., 7282.67) (1.04858*10^6, 3856.) (2.09715*10^6, 
			1986.91) (4.1943*10^6, 1009.23) (8.38861*10^6, 
			509.023) (1.67772*10^7, 256.) (3.35544*10^7, 
			128.749) (6.71089*10^7, 64.9366) (1.34218*10^8, 
			32.9839) (2.68435*10^8, 16.9959) (5.36871*10^8, 
			8.9989) (1.07374*10^9, 4.99969) (2.14748*10^9, 
			2.99991) (4.29497*10^9, 1.99997)
		};
		\addplot[color=black,line width=0.5mm,dotted] coordinates {	
			(1, 2)  (4.29497*10^9, 2)
		};
		\node at (800,10) {\color{black} All $f$: quantum trade-off};
		\node at (300,7000) {\color{tikzBlue2} $f_1$: classical trade-off};
		\node at (50000,100000000) {\color{tikzOrange2} $f_2$: classical trade-off};
		\end{loglogaxis}
	\end{tikzpicture}
	\caption{\label{fig:trade-off}\textbf{Classical versus quantum space-time trade-off.} The optimal trade-off between space cost and time cost of implementing stochastic matrices for a system of $s=32$ bits, i.e., with dimension $d=2^{32}$ (plotted in log-log scale). Solid coloured curves correspond to optimal trade-offs for classically implementing exemplary $\{0,1\}$-valued stochastic matrices described by functions \mbox{$f_1(i)=i\oplus 1$} (addition modulo $d$) and \mbox{$f_2(i)=\min\{i+2^{s/2},2^s-1\}$}, as analysed in Ref.~\cite{wolpert2019space}. Dashed black curve corresponds to optimal trade-offs for quantumly implementing any $\{0,1\}$-valued stochastic matrix, thus illustrating a quantum advantage.}
\end{figure}

Suppose that the state space is given by all bit strings of length $s$, so that $d = 2^s$. Theorem~\ref{thm:trade-off} shows that, if $|{\rm img}(f)|$ is $O(d)$, then $P_f$ is expensive to simulate by memoryless dynamics unless the number of fixed points is also $O(d)$. Since for a typical $f$ we have $|{\rm img}(f)| = O(d)$ and ${\rm fix}(f) = O(1)$ (see Appendix~\ref{app:typical}), we conclude that typically \mbox{$C_{\rm time}(P_f,m) = O(2^s/m)$}, i.e., an exponential number of memory states are required to have an efficient simulation in the number of time-steps. Conversely, one needs an exponential number of time-steps to have an efficient simulation for a fixed number of memory states. One of the examples discussed in Ref.~\cite{wolpert2019space} is that of \mbox{$f_1(i) = i \oplus 1$} (addition modulo $d$), which may be interpreted as keeping track of a clock in a digital computer. From Theorem~\ref{thm:trade-off} we see that one has \mbox{$C_{\rm time}(P_{f_1},m) \geq 2^s/m$} with $m$ the number of memory states introduced~(see Fig.~\ref{fig:trade-off}). However, as we discussed already above, any permutation is quantum embeddable by a unitary, and hence $Q_{\rm time}(P_{f_1},0) =1$. The existence of this advantage is generalised by the following result.

\begin{thm}[Quantum cost of a function]
	\label{thm:q_trade-off}
	For any \mbox{$m \geq 0$} and any function $f$ we have $Q_{\rm time}(P_f,m) \leq 2$.
\end{thm}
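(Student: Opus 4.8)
The plan is to show that any function $f:\mathbb{Z}_d\to\mathbb{Z}_d$ can be factored as $f = g \circ h$, where $g$ is a permutation and $h$ is a function whose associated stochastic matrix $P_h$ is \emph{classically} embeddable (indeed realisable in one time-step with zero memory), so that $P_f = P_g P_h$ with $P_g$ quantum-embeddable in one step (it is a unitary/permutation, as noted already in the discussion preceding Lemma~\ref{lem:composition}) and $P_h$ quantum-embeddable in one step (since classically embeddable matrices are quantum embeddable, via the construction of $\Phi$ given in Sec.~\ref{sec:examples}). Then Definition~\ref{def:quantum_time} gives $Q_{\rm time}(P_f,0)\le 2$, and allowing $m\ge 0$ extra memory states only enlarges the freedom, so $Q_{\rm time}(P_f,m)\le 2$ as well.

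First I would recall the standard decomposition of an arbitrary function into a ``collapse then relabel'' form. Let $R = \mathrm{img}(f)\subseteq\mathbb{Z}_d$ and write the fibres $f^{-1}(\{r\})$ for $r\in R$. Define $h$ to be the idempotent map that sends every element of the fibre $f^{-1}(\{r\})$ to a chosen representative $r\in R$ — i.e. $h$ is a retraction onto $R$ with $h|_R = \mathrm{id}_R$ — and define $g$ on $R$ by $g(r) = f(r')$ for any $r'$ in the fibre (well-defined because $f$ is constant on fibres); extend $g$ to a permutation of $\mathbb{Z}_d$ by matching the remaining (unused) domain points bijectively to the remaining codomain points $\mathbb{Z}_d\setminus R$, which have equal cardinality. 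Then $g\circ h = f$ by construction. The point of choosing $h$ idempotent with the representatives lying in its own image is that $P_h$ is a ``pinching''/collapse map that is classically embeddable: each fibre relaxes continuously onto its representative under a constant generator $L_h$ with $(L_h)_{r|j} = 1$ for $j$ in the fibre of $r$, $j\ne r$, and $e^{L_h t}\to P_h$ as $t\to\infty$ (the limit-embeddable case, which Sec.~\ref{sec:embeddable} explicitly allows, cf. the bit-erasure remark). This is one time-step. I would then invoke Lemma~\ref{lem:composition}: $P_f = P_g P_h$ is a product of two quantum-embeddable matrices, and unwinding the proof of that lemma the realising channel is $\mathcal{E}_{P_g}\circ\mathcal{D}\circ\mathcal{E}_{P_h}$ — but here $\mathcal{E}_{P_h}$ already dephases its output (it collapses onto classical representatives), so the composition is genuinely a two-step Markovian evolution matching the form in Definition~\ref{def:quantum_time}.

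The only genuinely delicate point — and the step I expect to need the most care — is bookkeeping the bound $\tau\le 2$ against Definition~\ref{def:quantum_time} precisely: that definition counts compositions $e^{\mathcal{L}^{(\tau)}t_\tau}\cdots e^{\mathcal{L}^{(1)}t_1}$ of time-\emph{independent} Lindbladians, with possibly $t_k=+\infty$, and asks that the resulting stochastic matrix \emph{implement} $P_f$ (equal on the first $d$ rows and columns after padding with $m$ memory states). So I would present it cleanly as: (i) $\mathcal{L}^{(1)}$ the classical collapse generator above, giving in the $t_1\to\infty$ limit the channel $\mathcal{E}_{P_h}$; (ii) $\mathcal{L}^{(2)} = -i[H,\cdot]$ with $H$ chosen so that $e^{iH t_2} = P_g$ (possible since $P_g$ is a permutation matrix, hence a unitary, and every unitary is $\exp(iHt)$ for suitable Hermitian $H$, $t$), giving a unitary channel. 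Their composition realises $P_g P_h = P_f$ exactly, in two steps, with zero memory. Finally, monotonicity in $m$: any realisation with $m$ memory states can ignore the extra states (act as identity on them), so $Q_{\rm time}(P_f,m)\le Q_{\rm time}(P_f,0)\le 2$, which is the claim. If one insists on finite $t_f$ rather than the limit, the collapse $P_h$ would instead be reached only approximately, but since we are proving an upper bound on the cost and the limit-embeddable case is admitted in the classical framework recalled here (and hence, through Sec.~\ref{sec:examples}, in the quantum one), the bound $\le 2$ stands.
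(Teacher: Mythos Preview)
Your approach matches the paper's: decompose $f$ as the composition of a permutation and an idempotent, realise the permutation via a time-independent Hamiltonian (one step) and the idempotent via a classical rate matrix (one step, in the limit $t \to \infty$). The paper happens to use the opposite order---permutation first, idempotent second, writing $f = f_I \circ f_\pi$---but either order yields $Q_{\rm time}(P_f,0) \le 2$.

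There is, however, a slip in your construction of $h$. You claim $h$ is a retraction onto $R = \mathrm{img}(f)$ with $h|_R = \mathrm{id}_R$, while also sending each fibre $f^{-1}(\{r\})$ to a single point. But for $r \in R$, the fibre containing $r$ is $f^{-1}(\{f(r)\})$, so your $h$ sends $r$ to the representative of the $f(r)$-fibre; this equals $r$ only if $f(r)=r$. Hence $h|_R=\mathrm{id}_R$ forces every element of $R$ to be a fixed point of $f$, which fails already for $d=3$, $f(0)=1$, $f(1)=1$, $f(2)=0$ (here $R=\{0,1\}$ but $f(0)=1\ne 0$). With your $h$ as written you in fact get $h=f$ and $g=\mathrm{id}$, and $f$ is not idempotent in general. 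The fix is easy and is implicit in the paper's construction: choose one representative $s_r\in f^{-1}(\{r\})$ for each $r\in R$, let $S=\{s_r:r\in R\}$ be this transversal, take $h$ to be the retraction onto $S$ (collapsing each fibre onto its own representative), and take $g$ to be any permutation extending the bijection $s_r\mapsto r$ from $S$ to $R$. Then $g\circ h=f$, $h$ is genuinely idempotent, and the rest of your argument (classical generator for $P_h$, unitary for $P_g$, no dephasing needed in between) goes through unchanged.
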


The explicit proof is given in Appendix~\ref{app:trade_off}, but it is based on the simple fact that every function can be realised quantumly by a unitary process realising a permutation followed by a classical master equation achieving an idempotent function $f_I$ in a single time-step. Hence, one can achieve every function quantumly using zero memory states and only two time-steps, compared to the typical classical cost \mbox{$C_{\rm time}(P_{f_1},m) \geq d/m$}. This result, illustrated in Fig.~\ref{fig:trade-off}, is quantitative evidence of the power of superposition to act as effective memory.


\section{Role of memory in state transformations}
\label{sec:state}

\subsection{Accessibility regions}

In this section, we change the focus from processes to states. We will investigate whether a given state transformation can be realised by either a classical or a quantum master equation. In other words, given an input distribution $\v{p}$, is it possible to get a given final state $\v{q}$ through an embeddable (or quantum embeddable) stochastic matrix~$P$? Of course, given full control it is always possible to choose a master equation with $\v{q}$ being the unique fixed point. However, more realistically, a fixed point of the evolution is constrained rather than being arbitrary -- and typically corresponds to the thermal Gibbs distribution $\v{\gamma}$ with
\begin{equation}
	\label{eq:thermalstate}
	\gamma_k:=\frac{1}{Z} e^{-\beta E_k},\quad Z:=\sum_{k=1}^d e^{-\beta E_k}.
\end{equation}
Here, $E_k$ are the energy levels of the system  interacting with an external environment at inverse temperature $\beta$. Hence, suppose some (full-rank) fixed point $\v{\gamma}$ is given (which may or may not be the thermal state of the system). We then introduce the following two definitions and discuss the corresponding examples involving standard dynamical models. The former encapsulates the set of input-output relations achievable by means of general stochastic processes with a fixed point $\v{\gamma}$, while the latter captures the subset achievable without exploiting memory effects, i.e. by Markovian master equations. 

\begin{defn}[Classical accessibility]
	\label{def:reach_class}
	A distribution $\v{q}$ is accessible from $\v{p}$ by a classical stochastic process with a fixed point $\v{\gamma}$ if there exists a stochastic matrix $P$, such that
	\begin{equation}
	P\v{p} = \v{q} \quad \textrm{and} \quad P\v{\gamma} = \v{\gamma}.
	\end{equation}
We denote the set of all $\v{q}$ accessible from $\v{p}$ given $\v{\gamma}$ by $\mathcal{C}^{\rm Mem}_{\v{\gamma}}(\v{p})$.
\end{defn}

\begin{ex}
	\label{ex:JC}
	A standard effective model in cavity QED and atomic physics is the Jaynes-Cummings model in the rotating wave approximation~\cite{breuer2002open}. Formally it describes the resonant interaction of a two-level system (with energy levels $\ket{0}$ and $\ket{1}$) with a single harmonic oscillator by means of the Hamiltonian:
	\begin{equation}
		H_{JC} = \frac{\hbar \omega}{2} \sigma_z + \hbar \omega a^\dag a + g(t) (\sigma^+ a + \sigma^- a^\dag), 
	\end{equation}
	where $a^\dag$ and $a$ are the creation and annihilation operators of the oscillator, and \mbox{$\sigma_z=\ketbra{0}{0}-\ketbra{1}{1}$}, \mbox{$\sigma^+ = \ketbra{1}{0}$}, \mbox{$\sigma^- = \ketbra{0}{1}$}. Suppose the oscillator is initially in a thermal state, the system is in a general state $\rho(0)$, and denote by $\v{p}(t)$ the populations in the energy eigenbasis of the system at time $t$. Then, in the reduced dynamics of the system generated by $H_{JC}$, the population and coherence terms decouple. Moreover, one can show that \mbox{$\v{p}(t) = P_t \v{p}(0)$}, where \mbox{$P_t \v{\gamma} = \v{\gamma}$} and $\v{\gamma}$ is the thermal distribution of the two-level system. Hence, \mbox{$\v{p}(t)\in \mathcal{C}^{\rm Mem}_{\v{\gamma}}(\v{p}(0))$}, i.e. if we know $\mathcal{C}^{\rm Mem}_{\v{\gamma}}(\v{p}(0))$ we can constrain the set of achievable final states. In fact, in the low temperature regime, $\mathcal{C}^{\rm Mem}_{\v{\gamma}}(\v{p}(0))$ is a good approximation of the set of all states that can be achieved in the Jaynes-Cummings model after a long enough dynamics~\cite{lostaglio2018elementary}.  
\end{ex}

In what follows we denote by $\rho_{\v{p}}$ the density matrix diagonal in the computational (or energy) basis with entries given by the probability distribution $\v{p}$: \mbox{$\rho_{\v{p}} = \sum_k p_k \ketbra{k}{k}$}.

\begin{ex}
	Consider a system in a classical state $\rho_{\v{p}}$. Then, $\mathcal{C}^{\rm Mem}_{\v{\gamma}}(\v{p})$ describes the set of classical states that can be obtained from $\rho_{\v{p}}$ by \emph{thermal operations}~\cite{horodecki2013fundamental,lostaglio2019introductory}, i.e., energy-preserving couplings of the system with arbitrary thermal baths at a temperature fixed by the choice of $\v{\gamma}$. This scenario was analysed in Refs.~\cite{ruch1978mixing,horodecki2013fundamental}, and it was proven there that $\mathcal{C}^{\rm Mem}_{\v{\gamma}}(\v{p})$ is fully specified by the notion of thermo-majorisation (also known as \emph{majorisation relative to $\v{\gamma}$}~\cite{marshall2010inequalities}). 
\end{ex}

The memoryless version of the above classical accessibility region is defined as follows.

\begin{defn}[Classical memoryless accessibility]
	\label{def:reach_class_mem}
	A distribution $\v{q}$ is accessible from $\v{p}$ by a classical master equation with a fixed point $\v{\gamma}$ if there exists a continuous one-parameter family $L(t)$ of rate matrices generating a family of stochastic matrices $P(t)$, such that
	\begin{equation}
	P(t_f)\v{p} = \v{q}, \quad L(t)\v{\gamma} = \mathbf{0} \quad \textrm{for all } t \in [0,t_f)
	\end{equation} We denote the set of all $\v{q}$ accessible from $\v{p}$ given $\v{\gamma}$ by $\mathcal{C}_{\v{\gamma}}(\v{p})$.
\end{defn}

\begin{ex}
	Consider an $N$-level quantum system weakly interacting with a large thermal bath. A microscopic derivation~\cite{davies1974markovian} leads to a quantum Markovian master equation for the system known as a Davies process~\cite{roga2010davies}, which is a standard model for thermalisation~\cite{alicki2018introduction}. One can then verify that the populations $\v{p}(t)$ of the system undergoing a Davies process satisfy a classical Markovian master equation,
	\begin{equation}
		\frac{d\v{p}(t)}{dt} = L \v{p}(t), \quad L \v{\gamma} = 0,
	\end{equation}
	and thus \mbox{$\v{p}(t) \in  \mathcal{C}_{\v{\gamma}}(\v{p}(0))$}. A characterisation of $\mathcal{C}_{\v{\gamma}}(\v{p})$ then allows one to restrict the intermediate non-equilibrium states generated by standard thermalisation processes, without the need to solve the actual dynamics. Since these processes are building blocks of more complex protocols -- they form thermalisation strokes of engines and refrigerators~\cite{uzdin2015equivalence} -- this information can be employed to identify thermodynamic protocols with optimal performance.
\end{ex}

These definitions naturally generalise to quantum dynamics in the following way.
\begin{defn}[Quantum accessibility]
	\label{def:reach_quant}
	A distribution $\v{q}$ is accessible from $\v{p}$ by quantum dynamics with a fixed point $\v{\gamma}$ if there exists a quantum channel $\E$, such that 
	\begin{equation}
	\E(\rho_{\v{p}})= \rho_{\v{q}} \quad  \textrm{and} \quad \E(\rho_{\v{\gamma}}) = \rho_{\v{\gamma}}.
	\end{equation}
We denote the set of all $\v{q}$ accessible from $\v{p}$ given $\v{\gamma}$ by $\mathcal{Q}^{\mathrm{Mem}}_{\v{\gamma}}(\v{p})$. 
\end{defn}

\begin{ex} 	
	Setting $\v{\gamma}$ to be the thermal Gibbs state, the set of channels satisfying $\mathcal{E}(\rho_{\v{\gamma}}) = \rho_{\v{\gamma}}$ (\emph{Gibbs-preserving maps}) was identified as the most general set of operations that can be performed without investing work~\cite{faist2018fundamental}. The reason is that any channel that is not Gibbs-preserving can create a non-equilibrium resource from an equilibrium state. These channels can then be taken as free operations and the minimal work cost of a general channel $\mathcal{F}$ can be computed from the minimal work a battery system must provide to simulate $\mathcal{F}$ using Gibbs-preserving maps only. The set of stochastic matrices $P$ that can be realized with Gibbs-preserving maps coincides with the set of all $P$ with a fixed point $\v{\gamma}$ (see, e.g., Theorem~1 of Ref.~\cite{lostaglio2019introductory}). It is not surprising that this includes $P$ that cannot be simulated without memory effects, since Gibbs-preserving quantum channels in general cannot be realized by means of a Markovian master equation, i.e. they require memory effects.
\end{ex} 

It is of course natural to consider the transformations that can be realised by the subset of Gibbs-preserving maps that originate from a Markovian quantum master equation, and so we introduce the following.

\begin{defn}[Quantum memoryless accessibility]
	\label{def:reach_quant_mem}
	A distribution $\v{q}$ is accessible from $\v{p}$ by a quantum master equation with a fixed point $\v{\gamma}$ if there exists a continuous one-parameter family of Lindbladians $\mathcal{L}(t)$ generating a family of quantum channels $\mathcal{E}(t)$, such that
	\begin{equation}
	\mathcal{E}(t_f)[\rho_{\v{p}} ]= \rho_{\v{q}}, \quad  \mathcal{L}(t)[\rho_{\v{\gamma}}] = 0 \quad \textrm{for all } t \in [0,t_f).
	\end{equation} 
We denote the set of all $\v{q}$ accessible from $\v{p}$ given $\v{\gamma}$ by $\mathcal{Q}_{\v{\gamma}}(\v{p})$. 
\end{defn}

\begin{ex} 
	Standard thermalisation processes resulting from weak couplings to a large environment, such as Davies maps, are generated by a Lindbladian $\mathcal{L}$ satisfying  $\mathcal{L}[\rho_{\v{\gamma}}] = 0$, as required by the definition above. However, these dynamics are unable to create quantum superpositions and hence cannot be used to show a quantum advantage. On the other hand, more exotic thermalisation processes exist. Let $\ket{\gamma}:= \sqrt{\gamma_0} \ket{0} + \sqrt{\gamma_1} \ket{1}$, $\gamma = \gamma_0 \ketbra{0}{0} + \gamma_1 \ketbra{1}{1}$ and consider the quantum master equation on a two-level system
	\begin{equation}
		\frac{d}{dt} \rho = \mathcal{L}(\rho), 
	\end{equation} 
	where $\L$ is the Lindbladian specified by Eq.~\eqref{eq:lindbladian} with a vanishing Hamiltonian $H$ and the map $\Phi$ given by a measure-and-prepare channel of the following form: 
	\begin{equation}
		\Phi(\rho) = \frac{\rho_{00}}{\gamma_0}(\gamma - \gamma_1 \ketbra{\gamma}{\gamma}) + \rho_{11} \ketbra{\gamma}{\gamma}.
	\end{equation}
	It satisfies $\mathcal{L}(\rho_{\gamma}) = 0$, so if $\v{p}(t)$ is the population vector of $\rho(t)$, one has \mbox{$\v{p}(t) \in \mathcal{Q}_{\v{\gamma}}(\v{p}(0))$.} One can verify that the dynamics equilibrates every state to $\gamma$ and yet it is capable of generating coherence (for example, for $\rho(0) = \ketbra{1}{1}$ one has $d \rho_{01}/dt  > 0$ around $t = 0$). This particular dynamics is unable to translate the ability to create coherence into the ability to generate a $P$ that classically requires memory (we will later construct dynamics that do). What the example illustrates, however, is a central mechanism by which a quantum advantage can arise, i.e. generation of quantum superpositions by exotic thermalising Markovian master equations. These dynamics are in principle allowed by quantum mechanics, but we leave open for future works how they can be actually realized. We note, in passing, that what is needed is a physical model where a Markovian process with a thermal fixed point naturally emerges despite the fact that the dynamics does not satisfy the secular approximation.
\end{ex}

Note that in Definitions~\ref{def:reach_class_mem}~and~\ref{def:reach_quant_mem}, the requirements \mbox{$L(t) \v{\gamma} = \v{0}$} and \mbox{ $\mathcal{L}(t) [\rho_{\v{\gamma}}] = 0$} for all times ensure that \mbox{$P(t)\v{\gamma} = \v{\gamma}$} and \mbox{$\mathcal{E}(t)[\rho_{\v{\gamma}}] = \rho_{\v{\gamma}}$} for all intermediate times $t \in [0, t_f]$. By characterising the difference between the sets $\mathcal{C}^{\rm Mem}_{\v{\gamma}}(\v{p})$ and $\mathcal{C}_{\v{\gamma}}(\v{p})$, one can capture the state transformations that can be achieved only through controls exploiting memory effects. That is, all states $\v{q}\in \mathcal{C}^{\rm Mem}_{\v{\gamma}}(\v{p})$, but not in $\mathcal{C}_{\v{\gamma}}(\v{p})$, can only be achieved from $\v{p}$ via a transformation that employs memory. Analogous statements hold for $\mathcal{Q}^{\rm Mem}_{\v{\gamma}}(\v{p})$ and $\mathcal{Q}_{\v{\gamma}}(\v{p})$.

In what follows we will study relations between the accessibility regions. Our main result will be that \mbox{$\mathcal{C}_{\v{\gamma}}(\v{p}) \subset \mathcal{Q}^{}_{\v{\gamma}}(\v{p})$} in the case of a uniform fixed point $\v{\eta}:=(1/d,\dots,1/d)$ and for \emph{general} fixed points for a qubit system. These results signal a quantum advantage, i.e., some transitions that classically require memory can be achieved through memoryless  quantum dynamics. 


\subsection{Quantum advantage at infinite temperature}
\label{sec:qubit_ext}
 
A natural question that arises is whether the sets $\mathcal{Q}_{\v{\gamma}}(\v{p})$ and $\mathcal{Q}^{\rm Mem}_{\v{\gamma}}(\v{p})$ are larger than their classical counterparts. This enlargement of the set of achievable states is another facet of the quantum advantage. More generally, one could also investigate more refined versions of memory advantages on state transformations, e.g. trying to include space and time resources in the analysis, similarly as we did it in Sec.~\ref{sec:space_time}.

It is straightforward to prove that without the memoryless constraint there will be no quantum advantage. In other words, we have $\v{q}\in\mathcal{Q}^{\rm Mem}_{\v{\gamma}}(\v{p})$ if and only if $\v{q}\in \mathcal{C}^{\rm Mem}_{\v{\gamma}}(\v{p})$. The ``if'' part is obvious, as the set of all quantum channels with a fixed point $\rho_{\v{\gamma}}$ contains as a subset the set of classical stochastic processes with the same fixed point. Conversely, take any $\v{q}\in\mathcal{Q}^{\rm Mem}_{\v{\gamma}}(\v{p})$, meaning that there exists a channel $\E$ such that $\E(\rho_{\v{p}})=\rho_{\v{q}}$ and $\E(\rho_{\v{\gamma}})=\rho_{\v{\gamma}}$. Then we can construct a stochastic process $P$ with matrix elements $P_{i|j}$ given by $\matrixel{i}{\mathcal{E}\left(\ketbra{j}{j}\right)}{i}$. Matrix $P$ is stochastic because $\mathcal{E}$ is positive and trace-preserving. Furthermore, it satisfies $P\v{p}=\v{q}$ and $P\v{\gamma}=\v{\gamma}$. Therefore, $\v{q}\in\mathcal{C}^{\rm Mem}_{\v{\gamma}}(\v{p})$. To sum up:
\begin{equation}
	\mathcal{Q}^{\rm Mem}_{\v{\gamma}}(\v{p}) = 	\mathcal{C}^{\rm Mem}_{\v{\gamma}}(\v{p}).
\end{equation}

However, as we will now prove, a quantum advantage is exhibited by $\mathcal{Q}_{\v{\gamma}}(\v{p}) \supsetneq \mathcal{C}_{\v{\gamma}}(\v{p})$, i.e., there are states classically accessible only with memory that can be achieved by quantum memoryless dynamics. In the case of a uniform fixed point, going from classical to quantum memoryless dynamics allows one to achieve the maximal quantum advantage: all transformations involving memory can be realised quantum mechanically with no memory.

\begin{thm}[Maximal quantum advantage for uniform fixed points]
	\label{thm:advantageinfinitetemperature} For every $\v{p}$ and a uniform distribution \mbox{$\v{\eta} = (1/d, \dots, 1/d)$} one has $\mathcal{Q}_{\v{\eta}}(\v{p}) = \mathcal{C}^{\rm Mem}_{\v{\eta}}(\v{p})$. 
\end{thm}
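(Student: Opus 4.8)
The plan is to establish the two inclusions separately. The inclusion $\mathcal{Q}_{\v{\eta}}(\v{p}) \subseteq \mathcal{C}^{\rm Mem}_{\v{\eta}}(\v{p})$ is immediate: if $\v{q} \in \mathcal{Q}_{\v{\eta}}(\v{p})$ then by Definition~\ref{def:reach_quant_mem} there is a quantum channel $\mathcal{E}$ (the endpoint $\mathcal{E}(t_f)$ of some Lindbladian evolution preserving $\rho_{\v{\eta}}$) with $\mathcal{E}(\rho_{\v{p}}) = \rho_{\v{q}}$ and $\mathcal{E}(\rho_{\v{\eta}}) = \rho_{\v{\eta}}$. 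Defining $P_{i|j} = \matrixel{i}{\mathcal{E}(\ketbra{j}{j})}{i}$ as in Eq.~\eqref{eq:qc_map} gives a stochastic matrix with $P\v{p} = \v{q}$ and $P\v{\eta} = \v{\eta}$, i.e. a bistochastic matrix realising the transition, so $\v{q} \in \mathcal{C}^{\rm Mem}_{\v{\eta}}(\v{p})$ (here $\mathcal{C}^{\rm Mem}$ for the uniform fixed point is exactly the set of $\v{q}$ reachable by some bistochastic matrix, equivalently $\v{q} \prec \v{p}$ by Birkhoff--von Neumann).

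The substance is the reverse inclusion $\mathcal{C}^{\rm Mem}_{\v{\eta}}(\v{p}) \subseteq \mathcal{Q}_{\v{\eta}}(\v{p})$: every $\v{q}$ with $\v{q} \prec \v{p}$ must be reachable by a \emph{memoryless} quantum evolution with uniform fixed point. By Birkhoff--von Neumann, a bistochastic $P$ with $P\v{p} = \v{q}$ can be written as a convex combination of permutations, $P = \sum_k \mu_k \Pi_k$. I would then construct $\mathcal{E}$ as a mixture $\mathcal{E}(\cdot) = \sum_k \mu_k \Pi_k (\cdot) \Pi_k^\dagger$ of the corresponding \emph{unitary} (permutation) channels; this is a unital channel with $\mathcal{E}(\rho_{\v{\eta}}) = \rho_{\v{\eta}}$ and, since the $\Pi_k$ are diagonal on the computational basis in the sense of mapping basis states to basis states, $\matrixel{i}{\mathcal{E}(\ketbra{j}{j})}{i} = P_{i|j}$, hence $\mathcal{E}(\rho_{\v{p}}) = \rho_{\v{q}}$. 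The task reduces to showing this random-unitary channel is \emph{Markovian} with a Lindbladian path preserving $\rho_{\v{\eta}}$ at all times. For this I would invoke the monoid/composition idea behind Lemma~\ref{lem:composition} together with the observation used just before it: every permutation $\Pi_k = e^{iH_k t_k}$ is generated by a Hamiltonian Lindbladian (which automatically has $\rho_{\v{\eta}}$ as fixed point, since $[\,\cdot\,, \rho_{\v{\eta}}] = 0$ for $\rho_{\v{\eta}} \propto \iden$), and a completely dephasing map $\mathcal{D}$ is also Markovian and unital. Concretely, I would build $\mathcal{E}$ as a time-ordered sequence: first evolve via a Hamiltonian realising $\Pi_1$, then a dephasing block, then a Hamiltonian realising $\Pi_2$, etc., but at the end of step $k$ the accumulated channel must equal $\sum_{\ell \le k} \tilde\mu_\ell \Pi_\ell$ on populations — this is achieved because a completely dephasing channel applied between a unitary $\Pi_k$ and the identity can be continuously dialled to interpolate between them at the level of populations (a "partial dephasing + partial permutation" one-parameter family), letting one mix in each $\Pi_k$ with the right weight while never leaving the unital set. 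More cleanly: since the uniform distribution is preserved by \emph{any} unital Lindbladian, and one can write $P$ as a finite product of pinching-type bistochastic matrices each of which is either a permutation or a $2\times 2$-block bistochastic matrix (embeddable, as $\det \ge 0$ for doubly stochastic $2\times2$), one can apply Lemma~\ref{lem:composition} directly: each factor is quantum-embeddable via a \emph{unital} channel, and composition of unital Markovian channels is a unital Markovian channel, whose Lindbladian path preserves $\rho_{\v{\eta}}$ throughout.

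The main obstacle I anticipate is the bookkeeping that the intermediate Lindbladians preserve $\rho_{\v{\eta}}$ \emph{at every time}, not merely that the final channel is unital — in Definition~\ref{def:reach_quant_mem} the constraint $\mathcal{L}(t)[\rho_{\v{\eta}}] = 0$ must hold along the whole path. This is where the uniform fixed point is doing real work: because $\rho_{\v{\eta}} \propto \iden$, the condition $\mathcal{L}(t)[\rho_{\v{\eta}}] = 0$ is equivalent to $\mathcal{L}(t)$ being \emph{unital} (its dual being trace-preserving is automatic), so one only needs each building-block Lindbladian — Hamiltonian generators and dephasing-type dissipators — to be unital, which they manifestly are. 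So the argument structure is: (i) $\mathcal{C}^{\rm Mem}_{\v{\eta}}(\v{p})$ equals the majorisation order cone, (ii) factor the bistochastic matrix into permutations and embeddable $2\times2$ blocks, (iii) realise each factor by a unital Markovian channel (Hamiltonian for permutations, classical embedding for the $2\times2$ block, each unital since $\v{\eta}$ is its fixed point), (iv) compose them along a concatenated, still-unital Lindbladian path, invoking Lemma~\ref{lem:composition} to handle the dephasing glue. I would keep the detailed verification of the unitality of each elementary piece brief, as it is a direct check on Eq.~\eqref{eq:lindbladian}.
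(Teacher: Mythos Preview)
Your final approach (the ``more cleanly'' paragraph) is essentially the paper's proof: permute first, then apply a sequence of $2\times 2$ bistochastic T-transforms, each classically embeddable with a generator annihilating $\v{\eta}$, and observe that every Lindbladian along the concatenated path is unital because $\rho_{\v{\eta}}\propto\iden$. The paper invokes an external result (Theorem~11 of Ref.~\cite{perry2018sufficient} at $\beta\to\infty$) to supply the T-transform sequence after sorting, but this is just the standard Hardy--Littlewood--P\'olya construction you are implicitly using.

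One slip to fix: the parenthetical ``$\det\ge 0$ for doubly stochastic $2\times 2$'' is false --- a $2\times 2$ bistochastic matrix $\left(\begin{smallmatrix}a & 1-a\\ 1-a & a\end{smallmatrix}\right)$ has determinant $2a-1$, which is negative for $a<1/2$. This does not break the argument, since any such block factors as a transposition times a $\det\ge 0$ bistochastic matrix; absorb the transposition into the permutation step (realised by a Hamiltonian, hence unital, Lindbladian) and the residual block is classically embeddable with a bistochastic generator $L$ satisfying $L\v{\eta}=\v{0}$. Your initial Birkhoff--von~Neumann/random-unitary route is rightly abandoned: a convex mixture of permutation unitaries is unital but need not be Markovian, and the ``partial dephasing + partial permutation'' interpolation you sketch does not obviously produce the required convex weights sequentially, so that path would need substantially more work than the factorisation argument.
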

\begin{proof}
	Clearly we have $\mathcal{Q}_{\v{\eta}}(\v{p}) \subseteq \mathcal{C}^{\rm Mem}_{\v{\eta}}(\v{p})$, because \mbox{$\mathcal{Q}_{\v{\eta}}(\v{p}) \subseteq \mathcal{Q}^{\rm Mem}_{\v{\eta}}(\v{p}) = \mathcal{C}^{\rm Mem}_{\v{\eta}}(\v{p})$}. Conversely, let us show \mbox{$ \mathcal{C}^{\rm Mem}_{\v{\eta}}(\v{p})\subseteq  \mathcal{Q}_{\v{\eta}}(\v{p})$.}  Take $\v{q}\in \mathcal{C}^{\rm Mem}_{\v{\eta}}(\v{p})$. As it is well-known (see Theorem B6 of \cite{marshall2010inequalities}) this is equivalent to majorisation relation $\v{p}\succ\v{q}$, i.e.,
	\begin{equation}
		\sum_{i=1}^k p^{\downarrow}_i \geq \sum_{i=1}^k q^{\downarrow}_i, \quad k=1, \dots d,
	\end{equation}	
	where $\v{p}^\downarrow$ denotes the probability distribution $\v{p}$ sorted in a non-increasing order. We will show that every $\v{q}$ satisfying the above can be achieved from $\v{p}$ by a composition of two quantum embeddable processes (so, according to Lemma~\ref{lem:composition}, by a quantum-embeddable process), each with a uniform fixed point. First, note that every permutation is quantum-embeddable, as discussed in Sec.~\ref{sec:examples}. Thus, one can rearrange $\v{p}$ into $\v{p}'$ with $\v{p}' \succ \v{p}$ and sorted in the same way as $\v{q}$. By transitivity of majorisation we have \mbox{$\v{p'}\succ\v{q}$}. Now, using Theorem~11 of Ref.~\cite{perry2018sufficient} with $\beta \rightarrow \infty$, it follows that $\v{q}$ can be achieved from $\v{p'}$ by applying a sequence of stochastic processes of the form
	\begin{equation}
		T^{(i,j)}:= \begin{bmatrix}
			1- \lambda/2 & \lambda/2 \\\lambda/2 & 1-\lambda /2
		\end{bmatrix} \oplus I_{\setminus_{(i,j)}},
	\end{equation}
	where $\lambda \in [0,1]$ and $I_{\backslash_{(i,j)}}$ is the identity matrix on all states excluding $i$ and $j$. Matrices $T^{(i,j)}$ are embeddable with rate matrices 
	\begin{equation}
		L^{(i,j)}= \begin{bmatrix}
		-1/2 & 1/2 \\ 1/2 & -1/2
		\end{bmatrix} \oplus \v{0}_{\backslash_{(i,j)}}
	\end{equation}
	where $\v{0}_{\backslash_{(i,j)}}$ is the zero matrix on all states excluding $i$ and $j$. Moreover, matrices $L^{(i,j)} $ satisfy $L^{(i,j)}\v{\eta} = \v{0}$. Putting everything together, we have a unitary permutation followed by a sequence of processes $e^{L^{(i_k,j_k)} t_k}$ that map $\v{p}$ into $\v{q}$, so $\v{q} \in \mathcal{Q}_{\v{\eta}}(\v{p})$. We conclude that $\mathcal{Q}_{\v{\eta}}(\v{p}) = \mathcal{C}^{\rm Mem}_{\v{\eta}}(\v{p})$.
\end{proof}

Since $\mathcal{Q}_{\v{\eta}}(\v{p}) = \mathcal{C}^{\rm Mem}_{\v{\eta}}(\v{p})$, in order to prove quantum advantage we only need to show that classical memoryless dynamics is more restrictive than general classical dynamics with a fixed point, i.e., that $\mathcal{C}_{\v{\eta}}(\v{p})$ is a proper subset of $\mathcal{C}^{\rm Mem}_{\v{\eta}}(\v{p})$. This is indeed the case, as can be easily verified for $d=2$ (and it is rigorously proven in Ref.~\cite{lostaglio2020fundamental} for a general fixed points $\v{\gamma}$ in any dimension). Here, in Fig.~\ref{fig:uni_advantage}, we illustrate that $\mathcal{C}_{\v{\eta}}(\v{p})$ is a proper subset of $\mathcal{C}^{\rm Mem}_{\v{\eta}}(\v{p})$ for an exemplary case of $d=3$.

\begin{figure}[t]
	\centering
	\begin{tikzpicture}
	\node (myfirstpic) at (0,0) {\includegraphics[width=0.7\columnwidth]{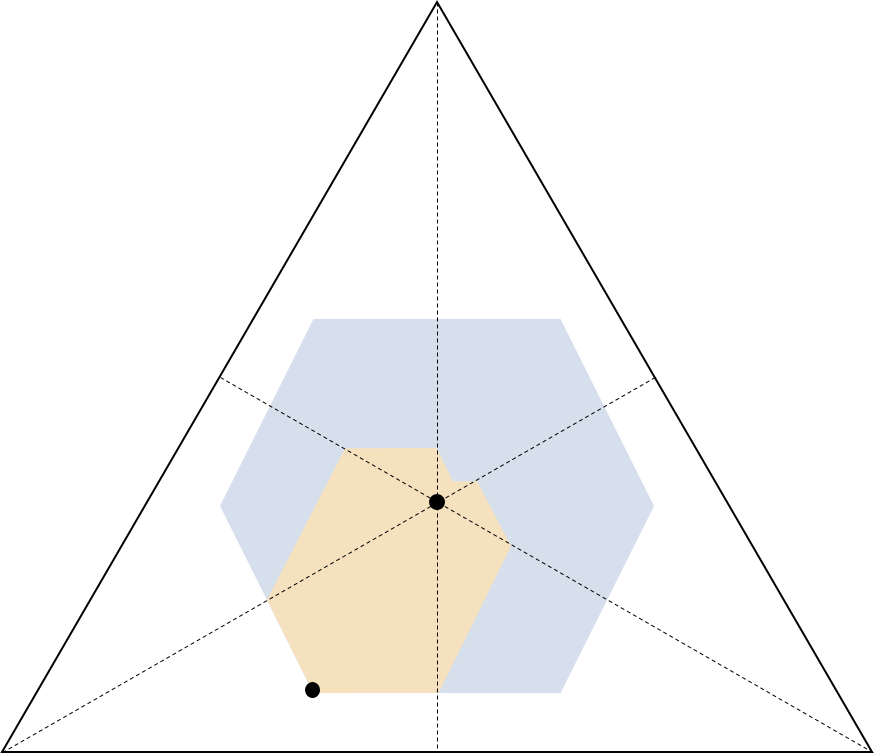}};
	\node at (0,2.9) {\color{black} $(1,0,0)$};
	\node at (3.7,-2.6) {\color{black} $(0,1,0)$};
	\node at (-3.7,-2.6) {\color{black} $(0,0,1)$};
	\node at (-1.25,-2.3) {\color{black} $\v{p}$};
	\node at (0.3,-1) {\color{black} $\v{\eta}$};
	\node at (0,0.1) {$\mathcal{Q}_{\v{\eta}}(\v{p})$};
	\node at (-0.5,-1.5) {$\mathcal{C}_{\v{\eta}}(\v{p})$};
	\end{tikzpicture}
	\caption{\label{fig:uni_advantage} \textbf{Quantum advantage at infinite temperature for $d=3$.} The sets of states accessible via classical ($\mathcal{C}_{\v{\eta}}(\v{p})$, smaller orange shape) and quantum ($\mathcal{Q}_{\v{\eta}}(\v{p})$, larger blue hexagon) memoryless dynamics with a uniform fixed point $\v{\eta}$ for a system of dimension $d=3$ and an exemplary initial state $\v{p}$ (each point inside a triangle corresponds to a probabilistic mixture of sharp distributions).}
\end{figure}

\subsection{Quantum advantage at any finite temperature (qubit case)}

Now we generalize the considerations above to arbitrary fixed points or, if we think of $\v{\gamma}$ as a thermal fixed point, arbitrary temperatures. In other words, we compare states achievable by classical memoryless processes with a fixed point $\v{\gamma}$, to states achievable by quantum memoryless processes with a fixed point $\rho_{\v{\gamma}}$. Here, we will solve the case for $d=2$. To ease physical intuition, we will parametrize $\v{\gamma}$ as in Eq.~\eqref{eq:thermalstate}, with $E=E_2-E_1$ the energy gap between the two states and $\beta$ the inverse temperature of the external environment. 

Let us start by recalling the classical solution to this problem. First, $\mathcal{C}^{\rm Mem}_{\v{\eta}}(\v{p})$ can be obtained using the thermo-majorisation condition \cite{horodecki2013fundamental,  ruch1978mixing}. If $\v{p} = (p,1-p)$, in terms of the achievable ground state populations we have
\begin{equation}
	\label{eq:achievablewithmemory}
\mathcal{C}^{\rm Mem}_{\v{\gamma}}(\v{p}) =  
\begin{cases}
		[p, 1-e^{-\beta E } p] \quad \textrm{if} \quad p\leq{1}/{Z}, \\ 
		[1-e^{-\beta E } p,p] \quad \textrm{if} \quad p> {1}/{Z} .
	\end{cases}
\end{equation}
This set of states can be approximately achieved by the Jaynes-Cummings interaction of Example~\ref{ex:JC}, which involves memory effects. The need for memory effects to achieve the states in Eq.~\eqref{eq:achievablewithmemory} can be understood more generally by looking at information backflows, which are a standard signature of non-Markovian effects. As one can verify from the previous expressions, any continuous trajectory $\v{r}(t)$ connecting $\v{p}$ to the other extremal point in $\mathcal{C}^{\rm Mem}_{\v{\gamma}}(p)$ has to pass through $\v{\gamma}$. This means that the (non-equilibrium) free energy of the system,
\begin{equation}
	\label{eq:free}
	\!\!	F(\v{p})= \sum_i p_i E_i - \frac{1}{\beta}  H(\v{p}),~~ H(\v{p}) = - \sum_i p_i \log p_i,\!
\end{equation}
will first decrease all the way to its minimal value, and then increase again, thus signalling an information back-flow from the thermal environment. It is obvious that such a phenomenon does not occur when dissipation is well-described by a Markovian master equation, since in this case the dynamics cannot cross the fixed state~$\v{\gamma}$. Memory effects (e.g., in the form of non-negligible system-bath correlations) are required to access all the states in $\mathcal{C}^{\rm Mem}_{\v{\gamma}}(p)$.

Let us now look at $\mathcal{C}_{\v{\gamma}}(p)$, the achievable states with classical memoryless processes with a thermal fixed point. For $d=2$, the set of Markovian master equations with a fixed point $\v{\gamma}$ is limited to the thermalisations introduced in Example~\ref{ex:classicalthermalization}. Hence, $\mathcal{C}_{\v{\gamma}}(p)$ is readily characterised as all states along a line connecting $\v{p}$ and the thermal state:
\begin{equation}
	\label{eq:achievablenomemory}
	\mathcal{C}_{\v{\gamma}}(\v{p}) =  
	\begin{cases}
		\left[p,1/Z\right] \quad \textrm{if} \quad p\leq 1/Z, \\ 
		\left[1/Z,p\right] \quad \textrm{if} \quad p> 1/Z. 
	\end{cases}
\end{equation}
The difference between Eqs.~\eqref{eq:achievablewithmemory}~and~\eqref{eq:achievablenomemory} is simply that memory effects allow one to go ``on the other side'' of the thermal state. Without memory this is not possible, since the thermal state is at a minimum of the free energy.

Now we turn to the corresponding quantum mechanical problem, looking for potential advantages. Any unitary that changes the population of the state is forbidden since it does not have $\rho_{\v{\gamma}}$ as a fixed point. We cannot hence rely on any of the previous constructions exploiting the fact that quantum mechanically one can generate permutations without using memory. We then need to characterise the set of diagonal quantum states achievable from a given state $\rho_{\v{p}}$ via Markovian quantum master equations with a given fixed point $\rho_{\v{\gamma}}$. Even without the constraint that the channel is generated by a master equation, finding a simple characterisation of the set of accessible states for $d>2$ has remained an open problem for decades~\cite{alberti1980problem}. This explains why we are focusing here on the simplest non-trivial case of a qubit system, where such problem has been fully solved~\cite{alberti1980problem,gour2017quantum, korzekwa2017structure}. We will numerically show that in $d=2$ one also achieves a maximal quantum advantage. Specifically, all states achievable classically by means of processes with thermal fixed points that involve memory, Eq.~\eqref{eq:achievablewithmemory}, can be attained by a Markovian quantum master equation with fixed point $\rho_{\v{\gamma}}$. More compactly, we have the following result:

\begin{res}[Quantum advantage at every finite temperature -- numerics]\label{res:1}
	For $d=2$, $\mathcal{Q}_{\v{\gamma}}(\v{p}) = C^{\rm Mem}_{\v{\gamma}}(\v{p})$.
\end{res}

This showcases that the advantage of Theorem~\ref{thm:advantageinfinitetemperature} is not limited to the special case of a uniform fixed point involving unitary dynamics. Superposition can substitute memory in the control of classical systems at every finite temperature. 

In order to prove Result~\ref{res:1}, we present an explicit construction and numerical evidence for an even stronger result.

\begin{res}[Numerics]
	\label{result2}
Every qubit state accessible via a qubit channel with given fixed point can be achieved by a qubit Markovian master equation with the same fixed point.
\end{res}

Let us start by recalling the result of Ref.~\cite{alberti1980problem}, where the authors provided necessary and sufficient conditions for the existence of a qubit channel $\E$ satisfying:
\begin{equation}
	\E(\rho)=\rho',\quad \E(\sigma)=\sigma',
\end{equation}
for any two pairs of qubit density matrices \mbox{$(\rho,\rho')$} and \mbox{$(\sigma,\sigma')$}. Moreover, whenever such a channel exists, the authors provided a construction of the Kraus operators of~$\E$. Setting \mbox{$\sigma=\sigma'=\rho_{\v{\gamma}}$} one obtains a characterisation of all states accessible from $\rho$ through arbitrary channels with a given fixed point $\rho_{\v{\gamma}}$ (we choose a basis in which the fixed point is diagonal). In Ref.~\cite{korzekwa2017structure} the continuous set of conditions presented in Ref.~\cite{alberti1980problem} was reduced to just two inequalities: 
\begin{equation}
\label{eq:r+-}
	R_\pm(\rho)\geq R_\pm(\rho').
\end{equation}

These are best understood through the standard Bloch sphere parametrisation of the states involved. Recall that a general qubit state can be written as 
\begin{align}
	\rho=\frac{\iden+\v{r}_\rho\cdot \v{\sigma}}{2},
\end{align}
where $\v{\sigma}$ denotes the vector of Pauli matrices \mbox{$(\sigma_x,\sigma_y,\sigma_z)$}, while $\v{r}_\rho$ is a three-dimensional real vector which uniquely represents $\rho$ as a point inside a unit Bloch ball in $\mathbb{R}^3$. We parametrise the initial, final and fixed points as follows:
\begin{equation}
	\label{eq:bloch_param}
\v{r}_\rho=(x,y,z),\quad\v{r}_{\rho'}=(x',y',z'),\quad\v{r}_\gamma=(0,0,\zeta).
\end{equation}
Unitary rotations about the $z$ axis leave $\rho_{\v{\gamma}}$ unchanged. By performing such rotations before and after the channel $\mathcal{E}$, without loss of generality we can set $x\geq 0$, $x'\geq 0$ and $y=y'=0$. The monotones $R_{\pm}$ from Eq.~\eqref{eq:r+-} are then defined as~\cite{korzekwa2017structure}:
\begin{equation}
	R_{\pm}(\rho)=\delta(\rho)\pm\zeta z,
\end{equation}
where
\begin{equation}
	\label{eq:delta_lattice}
	\delta(\rho):=\sqrt{(z-\zeta)^2+x^2(1-\zeta^2)},
\end{equation}
with analogous (primed) definitions for $\rho'$. The two inequalities from Eq.~\eqref{eq:r+-} can be then used to find extremal states accessible from $\rho$ via qubit channels with fixed point $\rho_{\v{\gamma}}$. As shown in Fig.~\ref{fig:qubit_regions}, these are given by:
\begin{itemize}
	\item States with a constant $R_+$ lying on a circle with centre and radius $(\v{c}_0, R_0)$ if $z' \geq z$, where
	\begin{equation}
		R_0=\frac{R_+-\zeta^2}{1-\zeta^2}, \quad \v{c}_0=[0,0,\zeta(1-R_0)].
	\end{equation}		
	\item States with a constant $R_-$ lying on a circle $(\v{c}_1, R_1)$ if $z' < z$, where
	\begin{equation}
		R_1=\frac{R_-+\zeta^2}{1-\zeta^2},\quad \v{c}_1=[0,0,\zeta(1+R_1)].
	\end{equation}
\end{itemize}

\begin{figure}[t]
	\centering
	\begin{tikzpicture}[line cap=round,line join=round,x=2.2cm,y=2.2cm]
		\clip(-1.15130521264221,-1.2346670851205752) rectangle (1.2120740854657621,1.3343488940797);
		\draw [line width=1.2pt,color=tikzOrange,fill=tikzOrange,fill opacity=0.5] (0.,0.12137003521531087) circle (1.1319436901052642cm);
		\draw [line width=1.6pt,color=tikzBlue,fill=tikzBlue,fill opacity=0.5] (0.,0.41196329811802246) circle (1.4252770234385979cm);
		\draw [line width=0.8pt] (0.,0.) circle (2.2cm);
		\draw [line width=1.6pt,color=tikzOrange2] (0.,0.12137003521531087)-- (0.5,0.);
		\draw [line width=1.6pt,color=tikzBlue2] (0.,0.41196329811802246)-- (0.5,0.);
		\draw [line width=0.8pt,dash pattern=on 2pt off 2pt] (0.,1.)-- (0.,-1.);
		\begin{scriptsize}
		\draw [fill=black] (0.5,0.) circle (2.0pt);
		\draw [fill=black] (0.,0.25) circle (2.0pt);
		\draw [fill=tikzOrange2] (0.,0.12137003521531087) circle (2.0pt);
		\draw [fill=tikzBlue2] (0.,0.41196329811802246) circle (2.0pt);
		\draw [fill=black] (0.,1.) circle (2.0pt);
		\draw [fill=black] (0.,-1.) circle (2.0pt);
		\end{scriptsize}
		\node at (0,1.15) {\color{black} $\ketbra{0}{0}$};
		\node at (0,-1.15) {\color{black} $\ketbra{1}{1}$};
		\node at (-0.15,0.25) {\color{black} $\rho_{\v{\gamma}}$};
		\node at (-0.15,0.45) {\color{tikzBlue2} $\v{c}_1$};
		\node at (-0.15,0.1) {\color{tikzOrange2} $\v{c}_0$};
		\node at (0.23,0.37) {\color{tikzBlue2} $R_1$};
		\node at (0.2,-0.05) {\color{tikzOrange2} $R_0$};
		\node at (0.6,0) {\color{black} $\rho$};
	\end{tikzpicture}
	\caption{\label{fig:qubit_regions} \textbf{Qubit accessibility region.}
		Geometrically, states with a fixed value of $R_+$ lie on a circle centred at $\v{c}_0$ and with radius $R_0$ (in orange). Similarly, states with a fixed value of $R_-$ lie on a circle centred at $\v{c}_1$ and with radius $R_1$ (in blue). States achievable from a given initial state $\rho$ via quantum channels with a fixed point $\rho_{\v{\gamma}}$ lie inside the Bloch sphere in the intersection of two balls $(\v{c}_0,R_0)$ and $(\v{c}_1,R_1)$. Here, the parameters for initial and fixed states are chosen to be $x=1/2$, $z=0$ and $\zeta=1/4$.}
\end{figure}
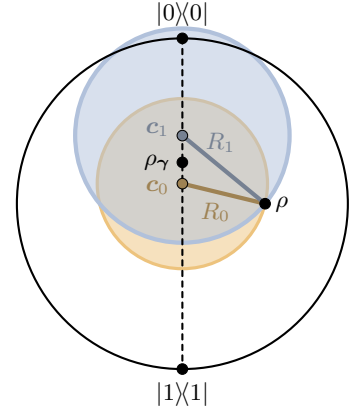

The crucial observation we make here is as follows. Consider the case $z' \geq z$. Divide the extremal path into $n$ parts by choosing states \mbox{$\rho_0,\dots,\rho_n$} along the $(\v{c}_0, R_0)$ circle with $\rho_0=\rho$. Since Eq.~\eqref{eq:r+-} is satisfied, for each $i\in\{0,\dots,n-1\}$ there exists $\mathcal{E}_i$ with $\mathcal{E}_i(\rho_i)= \rho_{i+1}$ and $\mathcal{E}_i(\rho_{\v{\gamma}}) = \rho_{\v{\gamma}}$. Similar considerations hold for $z' < z$ considering the $(\v{c}_1, R_1)$ circle. This suggests that there indeed exists a continuous Markov evolution that evolves the state along the extremal path. 

To construct a time-dependent Lindbladian that evolves the state along the extremal path (say, the one with $z' \geq z$) we fix some arbitrarily small $\Delta>0$ and find the state $\rho_1$ on the extremal path with $z'=z + \Delta$. Using the construction of Ref.~\cite{alberti1980problem} we obtain an explicit form for the quantum channel $\E_0$ mapping $\rho_0$ to $\rho_1$, while preserving $\rho_{\v{\gamma}}$. Next, we define the Lindbladian $\L_0=\E_0-\I$ and evolve the state according to $e^{\L_0}$, obtaining $\tilde{\rho}_1:=e^{\mathcal{L}_0} \rho_0$. We then repeat the same procedure, but instead of $\rho_0$ we start with $\tilde{\rho}_i$ for $i>0$. In this way we construct a whole set of Lindbladians ${\L}_i$. The procedure ends when Eq.~\eqref{eq:r+-} is no longer satisfied for  $z' = z + \Delta$. Due to the extremely complicated form of the Kraus operators describing the channels ${\E}_i$ (and hence ${\L}_i$), instead of their explicit expressions we provide their construction in Appendix~\ref{app:GP_qubit}.

We have thus constructed a quantum Markovian evolution $\prod_i e^{{\L}_i}$ passing through the points $\tilde{\rho}_i$.  Numerical investigations show that this Markovian dynamics evolves $\rho_0=\rho$ approximately along the extremal circle $(\v{c}_0, R_0)$ (or $(\v{c}_1, R_1)$ for $\Delta <0$), with the approximation improving as $\Delta \rightarrow 0$. We illustrate these results for particular choices of initial and fixed states in Fig.~\ref{fig:qubit_path}, and note that this is strong evidence that Result~\ref{result2} holds.

\begin{figure}[t]
	\begin{tikzpicture}[line cap=round,line join=round,x=2cm,y=2cm]
	
		\draw [line width=0.8pt] (-1.05,0.) circle (2cm);
		\draw [line width=0.8pt] (1.05,0.) circle (2cm);
		
		\draw [line width=0.8pt,dash pattern=on 2pt off 2pt] (-1.05,1.)-- (-1.05,-1.);
		\draw [line width=0.8pt,dash pattern=on 2pt off 2pt] (1.05,1.)-- (1.05,-1.);
		\begin{scriptsize}
		
		\draw [fill=black] (-1.05,-1.) circle (1.5pt);
		\draw [fill=black] (-1.05,1.) circle (1.5pt);
		
		\draw [fill=black] (1.05,-1.) circle (1.5pt);
		\draw [fill=black] (1.05,1.) circle (1.5pt);
		
		\draw [line width=2pt,color=tikzOrange,fill=tikzOrange,fill opacity=0.5] (-1.05,-3/9) arc (-90:90:5/9);
		\draw [line width=1.6pt,color=tikzOrange2] (-1.05,2/9)-- (-0.5,2/9);
		\draw [fill=tikzOrange2] (-1.05,2/9) circle (1.5pt);
		
		\draw [line width=2pt,color=tikzBlue, fill=tikzBlue, fill opacity=0.5] (1.05,-0.1) arc (-90:90:7/15);
		\draw [line width=1.6pt,color=tikzBlue2] (1.05,11/30)-- (1.5,11/30);
		\draw [fill=tikzBlue2] (1.05,11/30) circle (1.5pt);
		
		\draw [->,line width=1pt,color=black] (-3/9,1/24) arc (-45:45:2/9);	
		\draw [<-,line width=1pt,color=black] (1.66,0.2) arc (-45:45:2/9);
		
		\foreach \Point in {(-0.977365,-0.3212),(-0.893765,-0.29867),(-0.836197,-0.275724),(-0.790904,-0.252435),(-0.75326,-0.22894),(-0.721028,-0.205312),(-0.692929,-0.181593),(-0.66815,-0.157811),(-0.646142,-0.133982),(-0.626509,-0.110117),(-0.608962,-0.086225),(-0.593278,-0.0623127),(-0.579283,-0.0383849),(-0.56684,-0.0144454),(-0.55584,0.00950248),(-0.546194,0.033456),(-0.537828,0.0574129),(-0.530685,0.0813709),(-0.524717,0.105328),(-0.519886,0.129283),(-0.516163,0.153233),(-0.513525,0.177177),(-0.511957,0.201113),(-0.511449,0.22504),(-0.511997,0.248955),(-0.513605,0.272857),(-0.516281,0.296742),(-0.520039,0.32061),(-0.524901,0.344457),(-0.530894,0.36828),(-0.538056,0.392076),(-0.546431,0.41584),(-0.556078,0.439568),(-0.567065,0.463254),(-0.579479,0.48689),(-0.593426,0.510468),(-0.609038,0.533976),(-0.626483,0.557399),(-0.645974,0.580719),(-0.667791,0.603909),(-0.692312,0.626929),(-0.720062,0.649725),(-0.751817,0.672205),(-0.788802,0.694213),(-0.833177,0.715443),(-0.88959,0.735149),(-0.976726,0.750134),(-1.02977,0.752856),(-1.04441,0.753063),(-1.04846,0.753079)}{\draw [color=black,fill=black] \Point circle (0.85pt);};
		
		\foreach \Point in {(1.11091,0.823156),(1.177,0.80568),(1.22308,0.787877),(1.25997,0.769593),(1.29101,0.750955),(1.31784,0.732065),(1.34139,0.71299),(1.36228,0.693776),(1.38092,0.674454),(1.39763,0.655047),(1.41262,0.635574),(1.42609,0.616047),(1.43817,0.596475),(1.44896,0.576869),(1.45856,0.557233),(1.46705,0.537575),(1.47447,0.517898),(1.48089,0.498207),(1.48633,0.478505),(1.49084,0.458797),(1.49444,0.439085),(1.49715,0.419371),(1.49898,0.39966),(1.49995,0.379953),(1.50005,0.360254),(1.49929,0.340564),(1.49767,0.320888),(1.49518,0.301229),(1.49181,0.281589),(1.48753,0.261972),(1.48232,0.242383),(1.47616,0.222826),(1.46901,0.203306),(1.46082,0.183831),(1.45154,0.164407),(1.44109,0.145042),(1.4294,0.125749),(1.41637,0.10654),(1.40186,0.0874322),(1.38571,0.0684483),(1.36772,0.0496176),(1.34761,0.0309807),(1.325,0.0125956),(1.29934,-0.00545103),(1.2698,-0.0230206),(1.23497,-0.0398717),(1.1919,-0.0555161),(1.13034,-0.0684892),(1.07362,-0.0723749),(1.05695,-0.0727102)}{\draw [color=black,fill=black] \Point circle (0.85pt);};
		
		\draw [fill=black] (1.05,5/6) circle (1.5pt);
		\draw [fill=black] (-1.05,-1/3) circle (1.5pt);
		\draw [fill=black] (-1.05,1/2) circle (1.5pt);
		\draw [fill=black] (1.05,1/4) circle (1.5pt);
		\end{scriptsize}
		
		\node at (-1.05,1.2) {\color{black} $\ketbra{0}{0}$};
		\node at (-1.05,-1.2) {\color{black} $\ketbra{1}{1}$};
		\node at (1.05,1.2) {\color{black} $\ketbra{0}{0}$};
		\node at (1.05,-1.2) {\color{black} $\ketbra{1}{1}$};
		\node at (-1.2,0.5) {\color{black} $\rho_{\v{\gamma}}$};
		\node at (0.9,0.2167) {\color{black} $\rho_{\v{\gamma}}$};
		\node at (-1.2,-0.33333) {\color{black} $\rho$};
		\node at (0.9,0.83333) {\color{black} $\rho$};
		\node at (-1.2,2/9) {\color{tikzOrange2} $\v{c}_0$};
		\node at (0.9,0.4) {\color{tikzBlue2} $\v{c}_1$};	  
		\node at (-0.8,0.3222) {\color{tikzOrange2} $R_0$};
		\node at (1.25,0.47) {\color{tikzBlue2} $R_1$};	 
	\end{tikzpicture}
	\caption{\label{fig:qubit_path} \textbf{Qubit memoryless accessibility region.} Initial state $\rho$ evolved by $\prod_i e^{\L_i}$ (black dots) as described in the main text. The evolved states clearly approach the extremal paths for thermodynamic processes with memory given by circles $(\v{c}_0,R_0)$ and $(\v{c}_1,R_1)$. In both panels $\Delta$ chosen such that the evolution is divided into 100 equal steps (to increase readability only the even steps are plotted). (a) Parameters: $x=0$, $z=-1/3$ and $\zeta=1/2$. (b) Parameters: $x=0$, $z=5/6$, $\zeta=1/4$.}
\end{figure}
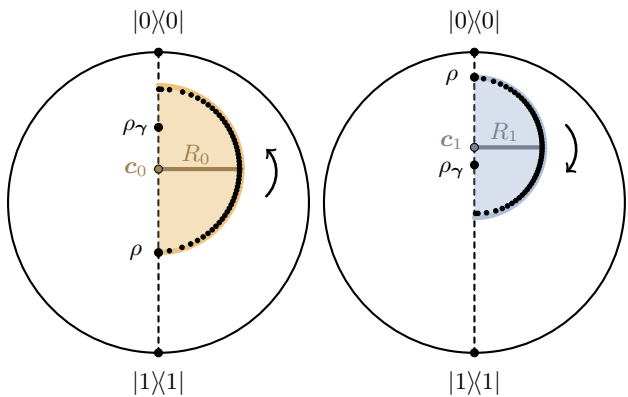


\section{Potential for practical advantages}
\label{sec:applications}

While the quantum advantages described in this paper may ultimately find practical applications in information processing technologies, we do not think they can be immediately applied to improve the performance of modern day classical information processors. Quantum advantages proven here arise only once we get to the fundamental limit of encoding classical bits in quantised levels of memory systems. As mentioned in Sec.~\ref{sec:intro:b}, whenever a bit is encoded in macroscopic degrees of freedom (that actually consist of $d\gg 2$ distinct quantum levels) one can employ a subset of them to act as an internal memory and, e.g., unlock the possibility of a classical Markovian bit-swap. Moreover, from a practical perspective, the cost of coherent control over all degrees of freedom of a memory system is massive and so far seems to greatly exceed the benefits we have proven here. The situation in some sense resembles the one with the famous Landauer's principle~\cite{landauer1961irreversibility}: while switching from irreversible to reversible computations one can go below the fundamental $k_BT \log 2$ limit of dissipated heat per use of an AND gate~\cite{bennett1973logical}, the technological cost related to such a change still exceeds possible gains~\cite{koomey2010implications}. Nevertheless, if one day we have as good coherent control over quantum systems as we have incoherent control over classical systems, we will need to use fewer memory states to implement certain processes. 

Despite the above outlook, we still believe that our results may find near-term applications on the side of control theory and quantum thermodynamics. As we discussed, for $d=2$, any classical Markovian master equation with a fixed point $\v{\gamma}$ evolves $\v{p}$ along the path $\v{p}(t)$ that can never go ``on the other side of the fixed point'' (recall Fig.~\ref{fig:cooling}); memory is required for that to happen. On the other hand, the corresponding quantum Markovian master equations access all states achievable under general stochastic maps with a fixed point $\v{\gamma}$. Creation of quantum coherence is crucial since it opens new pathways that ``go around'' the fixed state. What is surprising is that the creation of coherence in a qubit Markovian dissipative process can replace \emph{all} memory effects. Even the ``$\beta$-swap'', the classical process with a thermal fixed point that requires the largest free energy back-flow and achieves the farthest accessible state on the other side of $\v{\gamma}$ in Eq.~\eqref{eq:achievablenomemory}, can be approximated arbitrarily well by a quantum Markovian master equation with a thermal fixed point. 

This observation potentially has significant consequences in the context of cooling, a central problem in quantum sciences. Access to pure, ``cold'' quantum systems is a preliminary requirement for quantum computing~\cite{divincenzo2000physical}. One possible way to achieve it is to employ heat-bath algorithmic cooling (HBAC) protocols~\cite{schulman1999molecular,boykin2002algorithmic, rodriguez2020novel}, initially developed in the context of NMR systems. These are optimised to achieve the largest possible cooling of a target system by means of a sequence of unitary operations and interactions with a heat bath. Recently, optimal HBAC were derived~\cite{rodriguez2017correlation, alhambra2019heat}. For a single qubit, these involve performing a sequence of $X$-pulses and $\beta$-swaps, leading to an exponential convergence to the ground state in the number of cooling rounds~\cite{alhambra2019heat}. What is more, it was also noted that if the interactions with the bath are restricted to be Markovian and destroy superpositions, then one cannot cool the target system below the environment's temperature. 
	
In the scheme of Ref.~\cite{alhambra2019heat} it is the memory effects, encoded in system-bath correlations, that are central in achieving the desired cooling performance. Our Result~\ref{res:1} implies that this control over system-bath correlations can be replaced by the control over the coherent properties of the system. Specifically, a time-dependent Lindbladian with a thermal fixed point is able to generate the required $\beta$-swap. This implies that qubit systems can be cooled all the way to their ground state without exploiting any memory effects. This is a purely quantum phenomenon, exploiting control over coherent resources. The practical impact of these protocols, of course, depends on the possibility of experimentally realizing these exotic dynamics, which is an important question put forward by this investigation. 
	
Moreover, the phenomenon identified in the previous section for qubit systems may be relevant in a much broader setup, as the following general mechanism illustrates. Suppose for simplicity that there are no degeneracies in the Bohr spectrum of the system, i.e., the allowed energy differences, $\{E_i - E_j\}_{i \neq j}$, for the studied system are all distinct. Given the initial classical state of the system, $\rho(0)=\rho_{\v{p}}$, and its evolution described by $\rho(t)$, use the following decomposition:
\begin{equation}
	\rho(t) = \rho_{\v{r}(t)} + C(t),
\end{equation}
where $\v{r}(t)$ is the population in the energy eigenbasis and $C(t)$ denotes the off-diagonal terms (``coherences'') at time $t$. Any classical Markovian evolution with a thermal fixed point requires $\frac{d}{dt} F(\v{r}(t)) \leq 0$ and $C(t) = 0$ at all $t \geq 0$. Any quantum Markovian dynamics requires $\frac{d}{dt} F_Q(\rho(t)) \leq 0$ for all $t\geq0$, where $F_Q$ is the quantum non-equilibrium free energy:
\begin{equation}
	F_Q(\rho) = \tr{\rho H_S} - \beta^{-1} S(\rho), 
\end{equation}  
with $H_S$ denoting the Hamiltonian of the system and \mbox{$S(\rho) = - \tr{\rho \log \rho}$} being the von Neumann entropy. Recall that $F_Q(\rho) $ can be additively decomposed into two non-negative components~\cite{lostaglio2019introductory}:
\begin{equation}
	F_Q(\rho(t)) = F(\v{r}(t)) + \beta^{-1} A(\rho(t)).
\end{equation}

The first term is the classical non-equilibrium free energy and the second one is a quantum component (called ``asymmetry''), which measures the coherent contribution to $F_Q$~\cite{lostaglio2015description}. At $t=0$ we have $\v{r}(0) = \v{p}$ and $C(0) = 0$, which implies $A(\rho(0)) = 0$. Hence, both classical and quantum free energies for the initial state are equal to~$F(\v{p})$. However, a Markovian quantum evolution can store some free energy in coherences at times $t\geq 0$, since only the \emph{sum} of the classical and quantum components of the free energy must monotonically decrease in time. This way, at time $t_*$ when $\v{r}(t_*) = \v{\gamma}$, classically one is stuck in a free energy minimum $F(\v{\gamma})$ and cannot proceed further. But quantum mechanically one can have $C(t_*) \neq 0$ and hence $F_Q(\rho(t_*)) > F(\v{\gamma})$. Markovian quantum dynamics can hence access other states at $t>t_*$ by converting back some of the quantum component of free energy into classical free energy. This allows one to achieve the required backflow into the classical component of the free energy, as illustrated Fig.~\ref{fig:freeenergysketch}. Storing free energy in coherence is of course a non-trivial task (it requires the aid of an external source of coherence and certainly it cannot be done with thermal operations~\cite{lostaglio2015description}), but our qubit construction has shown that it is possible with quantum Markovian master equations.

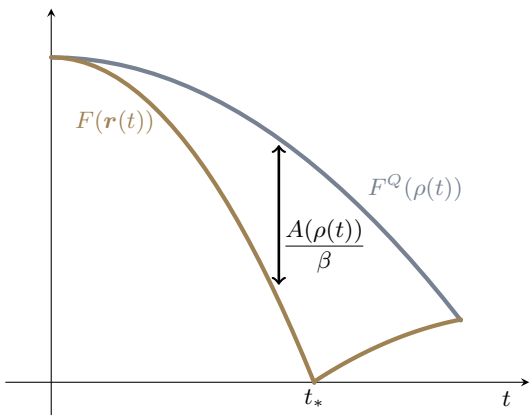
\begin{figure}[t]
	\centering
	\begin{tikzpicture}[line cap=round,line join=round,x=\columnwidth,y=\columnwidth]
	\begin{axis}[
	x=0.7\columnwidth,y=0.5\columnwidth,
	axis lines=middle,
	xmin=-0.1,
	xmax=1.05,
	ymin=-0.1,
	ymax=1.15,
	ticks=none]
	\clip(-0.1,-0.1) rectangle (1.05,1.15);
	
	\draw [-,line width=1.6pt,color=tikzBlue2,domain=0:0.9] plot({\x},{1-\x*\x});
	
	\draw [-,line width=1.6pt,color=tikzOrange2,domain=0:1/1.73] plot({\x},{1-3*\x*\x});
	
	\draw [-,line width=1.6pt,color=tikzOrange2,domain=1/1.73:0.9] plot({\x},{-\x*\x+2.07*\x-0.86});
	
	\draw [<->,line width=1.pt,color=black] (0.5,0.3)-- (0.5,0.73);
	
	\draw[color=tikzBlue2] (0.8,0.6) node {$F^Q(\rho(t))$};
	\draw[color=tikzOrange2] (0.14,0.8) node {$F(\v{r}(t))$};
	\draw[color=black] (0.6,0.42) node {$\dfrac{A(\rho(t))}{\beta}$};
	
	\draw[color=black] (1,-0.05) node {$t$};
	\draw[color=black] (1/1.73,-0.05) node {$t_*$};
	\end{axis}
	\end{tikzpicture}
	\caption{\label{fig:freeenergysketch}\textbf{Free energy stored in coherence.} Under a Markovian master equation with a thermal fixed point, the quantum free energy $F^Q$ is monotonically decreasing in time. Since $F^Q = F +  A/\beta$, part of the classical component $F$ can be stored in the coherent component $A$ at times $t \leq t_*$ and recovered later. At time $t=t_*$ the population is thermal, so the classical free energy is at a minimum, but $A \neq 0$ so $F^Q$ is above the minimum. For $t\geq t_*$ part of this coherent free energy is converted back into classical free energy. Hence, the latter undergoes a backflow which classically would require memory effects.}
\end{figure}


\section{Outlook}
\label{sec:outlook}

The central task of this paper was to introduce a unifying framework to prove that quantum dynamics offers memory improvements over the classical stochastic evolution in a variety of settings, from the more computational ones to the more physical ones. The unifying notion is that of the underlying Markovian master equation and the advantages one gains with quantum controls as compared to the classical ones. The driving force behind these advantages is the superposition principle, which provides a wider arena for memoryless evolutions to unfold. While Holevo's theorem~\cite{nielsen2010quantum} prevents us from retrieving more than $n$ bits of information from $n$ qubits, we see that in many other respects superpositions can take over the role played by a classical memory. 

This is most clearly captured by the notion of quantum embeddable stochastic processes introduced here: processes which do not require any memory quantum mechanically but they do classically. We found several classes of such processes, but the full characterisation is left as a big open problem for future research. It may be especially hard taking into account that the classical version of the problem is still unsolved for $d>3$, however recent progress on accessibility of quantum channels via the Lindblad semigroup~\cite{shahbeigi2020log} is promising. Moreover, one may still hope for a partial characterisation, e.g., it would be of particular interest to identify the outer limits to the quantum advantage by means of necessary conditions for quantum embeddability. 

We have also proved the quantum advantage in terms of memory and time-step cost of implementing a given stochastic process. By means of computational basis input states and measurements, a quantum Markovian master equation on a $(d+m_Q)$-dimensional quantum system realises a $d\times d$ stochastic process $P$ in $\tau_Q$ time-steps. We compared $(m_Q,\tau_Q)$ with the minimal $(m_C,\tau_C)$ required in order to simulate the same $P$ through any classical Markovian master equation. When $P$ is deterministic, i.e., it is a function over a discrete state space, we saw that typically one has a gap between $(m_Q,\tau_Q)$ and $(m_C,\tau_C)$. 

From a technological perspective, our investigations lead to the question whether these in-principle simulation advantages translate into practical ones. There is a long history, dating back to Landauer, of associating a cost to erasure~\cite{landauer1961irreversibility}, due to the unavoidable dissipation of entropy required for the implementation of such processes. Notably, these costs are common to classical and quantum scenarios. On the other hand, we showed here that the memory and time-step costs of computations under memoryless quantum dynamics are typically lower than the corresponding classical costs. However, practical advantages can be expected only once we overcome technological barriers concerning (a) information density, and (b) coherent control of quantum systems. As we do not expect this to happen within the next decades, we decided not to discuss this point as a potential application.

It would be of interest to extend our framework to sequential scenarios, where one samples $P_{i|j}(t_k)$ over the visible states $i,j$ at discrete times $t_k$, and looks for a predictive dynamical model matching the observed data. We showed that quantum models can reproduce the same data with advantages in terms of the number of hidden states and time-steps. It should be investigated whether such an advantage becomes larger in sequential scenarios. If these translate in practical advantages which can be realised in near-term quantum devices is an intriguing open question of the present investigation.

Finally, our results may also find applicability in the realm of quantum thermodynamics. Our proof that qubit systems can be quantum mechanically cooled below the environmental temperature using neither memory effects nor ancillas (something that is impossible classically) suggests that we should look for realistic Markovian master equations in which this phenomenon can be observed. One could also investigate possible memory advantages in control (and especially cooling) for higher dimensional systems. 


\section*{Acknowledgements}

The authors would like to thank Stephen Bartlett for hosting them at the initial stage of the project, as well as Iacopo Carusotto and Claudio Fontanari for their kind hospitality in Trento, where part of this project was carried out. KK would also like to thank K.~\.{Z}yczkowski for helpful discussions. KK acknowledges financial support by the Foundation for Polish Science through IRAP project co-financed by EU within Smart Growth Operational Programme (contract no. 2018/MAB/5) and through TEAM-NET project (contract no. POIR.04.04.00-00-17C1/18-00). ML acknowledges financial support from the the European Union's Marie Sk{\l}odowska-Curie individual Fellowships (H2020-MSCA-IF-2017, GA794842), Spanish MINECO (Severo Ochoa SEV-2015-0522 and project QIBEQI FIS2016-80773-P), Fundacio Cellex and Generalitat de Catalunya (CERCA Programme and SGR 875) and ERC grant EQEC No. 682726.


\appendix

\section{Typical functions}
\label{app:typical}

Consider a function $f: \mathbb{Z}_d \rightarrow \mathbb{Z}_d$ sampled uniformly from the set of all such functions. First, focus on the dimension of the image. The probability that a given \mbox{$a \in \mathbb{Z}_d$} is not in the image of $f$ is $(1-1/d)^d \approx 1/e$ for large $d$. The average dimension of the image is hence a binomial with average $d\left(1-\frac{1}{e}\right)$ and variance $\frac{d}{e}\left(1-\frac{1}{e}\right)$. Hence, for large $d$, the size of the image of $f$ is $O(d)$. Second, focus on the number of fixed points. The probability that a given $a \in \mathbb{Z}_d$ is a fixed point is $1/d$. The number of fixed points is hence a binomial with average $\frac{1}{d} \times d = 1$ and variance $1 \times (1-\frac{1}{d})$. As such, for large $d$ the number of fixed points is $O(1)$. 


\section{Proof of Theorem~\ref{thm:q_trade-off}}
\label{app:trade_off}

\begin{proof}
	Given a function \mbox{$f: \mathbb{Z}_d \rightarrow \mathbb{Z}_d$} let us denote the size of the image of $f$ by $r=|{\rm img}(f)|$. Next, we denote the elements of ${\rm img}(f)$ by $\{y_k\}_{k=1}^r$, and the remaining elements belonging to \mbox{$\mathbb{Z}_d\setminus {\rm img}(f)$} by $\{y_k\}_{k=r+1}^d$. Moreover, for each $y_k$ with $k\leq r$, i.e., for each of the $r$ elements of the image of $f$, let us denote the corresponding pre-image as follows:
	\begin{align}
		f^{-1}(y_k)=\{x_j^k\}_{j=1}^{d_k}.
	\end{align}
	Note that the sets $\{x_j^k\}_{j=1}^{d_k}$ are disjoint and that their union is the full set $\mathbb{Z}_d$. 
	
	Now, we will construct a permutation function $f_\pi$ and an idempotent function $f_I$, both mapping $\mathbb{Z}_d$ to $\mathbb{Z}_d$ and such that
	\begin{equation}
		\label{eq:f_decomp}
		f=f_I\circ f_\pi.
	\end{equation}
	First, $f_\pi$ is defined by
	\begin{equation}
		\!f_\pi (x_j^k)=\left\{ 
		\begin{array}{ll}
			y_k~& \mathrm{if~} j\!=\!1,\\
			y_{r +\sum_{l=1}^{k-1}(d_l-1)+j-1} ~& \mathrm{otherwise},
		\end{array}
		\right.
	\end{equation}
	where the convention is that $\sum_{l=1}^0 \equiv 0$. Then, introducing \mbox{$n_s:=r+\sum_{l=1}^{s-1}(d_l-1)$}, the idempotent map $f_I$ is given by
	\begin{equation}
		\!f_I (y_k)=\left\{ 
		\begin{array}{ll}
		y_k~& \mathrm{for~} k\!\leq\!r,\\
		y_s~& \mathrm{for~}k\in\{n_s+1,n_s+d_s-1\}.
		\end{array}
		\right.
	\end{equation}
	With the above definitions, it is a straightforward calculation to show that Eq.~\eqref{eq:f_decomp} holds. 
	
	Finally, we need to show that there exist time-independent Lindbladians $\L_\pi$ and $\L_I$ that generate $P_{f_\pi}$ and $P_{f_I}$, i.e., $\{0,1\}$-valued stochastic matrices realising functions $f_\pi$ and $f_I$, respectively. This way, by Definition~\ref{def:quantum_time}, we will prove that any function $f$ can be realised quantumly without the use of memory and in at most 2 time-steps. First, since $P_{f_\pi}$ is a permutation, its generator $\L_\pi$ exists and is simply given by the commutator with the Hamiltonian (see discussion in Sec.~\ref{sec:examples}). Now, in the case of $f_I$, notice that it is a function sending $r$ disjoint sets $Y_k$ of size $d_k$,
	\begin{equation}
		Y_k = y_k\cup \{y_l\}_{l=n_k+1}^{n_k+d_k-1},
	\end{equation} 
	to a single element $y_k$ of the given set $Y_k$. This mapping can be easily realised for $t_f\to\infty$ by a classical generator $L$ (so also by the corresponding quantum Lindbladian) given by
	\begin{equation}
		L_{y_k|y_l}=\left\{
		\begin{array}{ll}
			-1~& \mathrm{for~}k=l\mathrm{~and~}l> r, \\
			1~&  \mathrm{for~}k=f_I(l)\mathrm{~and~}l> r,\\
			0~&\mathrm{otherwise}.
		\end{array}
		\right.
	\end{equation}
\end{proof}


\section{Extremal path}
\label{app:GP_qubit}

Consider the initial qubit state $\rho$ described by the Bloch vector \mbox{$(x,0,z)$}, and a fixed state $\rho_{\v{\gamma}}$ with the Bloch vector \mbox{$(0,0,\zeta)$}. Here, we will show how to construct quantum channels $\E_0$ and $\E_1$ with a fixed point $\rho_{\v{\gamma}}$, and evolving $\rho$ along the extremal circles $(\v{c}_0,R_0)$ and $(\v{c}_1,R_1)$, as derived in Ref.~\cite{korzekwa2017structure} and described in Sec.~\ref{sec:qubit_ext}. More precisely, for a given $\Delta>0$ we look for $\E_0$ that evolves $\rho$ to $\rho'$ with
\begin{equation}
	z'=z+\Delta, \quad x'=\sqrt{R_0^2-(z' - \zeta(1-R_0))^2}.
\end{equation}
Similarly, for a given $\Delta>0$ we look for $\E_1$ that evolves $\rho$ to $\rho'$ with
\begin{equation}
	z'=z-\Delta, \quad x'=\sqrt{R_1^2-(z' - \zeta(1+R_1))^2}.
\end{equation}
Note that in both cases there is a maximal value of $\Delta$ for $x'$ to stay real, and we assume that $\Delta$ is below that maximal value (otherwise the map we are looking for does not exist). Below, we will explain how to construct Kraus operators $\{A_i,B_i,C_i\}$ for $\E_i$, so that
\begin{equation}
	\E_i(\cdot)=A_i(\cdot)A_i^\dagger+B_i(\cdot)B_i^\dagger+C_i(\cdot)C_i^\dagger.
\end{equation}
The construction is based on the general construction for channels mapping between pairs of qubit states provided by Alberti and Uhlmann in Ref.~\cite{alberti1980problem}. 

The first step is to define the following projectors:
\begin{subequations}
\begin{align}
	\ketbra{\psi_0}{\psi_0}&=\frac{1}{R_0}(\rho-(1-R_0)\rho_{\v{\gamma}}),\\
	\ketbra{\psi_0'}{\psi_0'}&=\frac{1}{R_0}(\rho'-(1-R_0)\rho_{\v{\gamma}}),\\
	\ketbra{\psi_1}{\psi_1}&=-\frac{1}{R_1}(\rho-(1+R_1)\rho_{\v{\gamma}}),\\
	\ketbra{\psi_1'}{\psi_1'}&=-\frac{1}{R_1}(\rho'-(1+R_1)\rho_{\v{\gamma}}).
\end{align}
\end{subequations}
The above four projectors are used to define four unitary matrices,
\begin{align}
	U_i&=\ketbra{0}{\psi_i}+\ketbra{1}{\psi_i^\perp},
\end{align}
with $i\in\{0,1\}$ and analogous primed definition for $U_i'$. These are then employed to define four rotated fixed states $\Gamma_i:=U_i \rho_{\v{\gamma}} U_i^\dagger$ and analogously for $\Gamma_i'$. Let us parametrise these states as follows:
\begin{align}
	\Gamma_i&=\begin{pmatrix}
	a_i&\epsilon_i\sqrt{a_i(1-a_i)}\\\epsilon_i\sqrt{a_i(1-a_i)}&1-a_i
	\end{pmatrix}.
\end{align}
These eight parameters are then used to calculate the following eight new parameters:
\begin{subequations}
	\begin{align}
		\alpha_i =& \sqrt{\frac{a_i (1 - a_i')}{a_i' (1 - a_i)}} \cdot\frac{\epsilon_i \epsilon_i'}{1-\frac{a_i}{a_i'}(1-\epsilon_i^2)}, \\
		\beta_i =& 
		\sqrt{\frac{(a_i' - a_i) (1 - a_i')}{(1 - a_i) a_i'}} \cdot\frac{\epsilon_i'}{1-\frac{a_i}{a_i'}(1-\epsilon_i^2)},\\
		\gamma_i =& \sqrt{\frac{(1-\epsilon_i'^2)-\frac{a_i}{a_i'}(1-\epsilon_i^2)}{1-\frac{a_i}{a_i'}(1-\epsilon_i^2)}} \cdot\sqrt{\frac{1 - a_i'}{1 - a_i}},\\
		\omega_i =& \sqrt{\frac{a_i' - a_i}{1 - a_i}}.			
	\end{align}
\end{subequations}
These, in turn, allow us to introduce the following operators:
\begin{align}
\!	A^\star_i = \begin{pmatrix}
	1&0\\0&\alpha_i
	\end{pmatrix},\quad
	B^\star_i = \begin{pmatrix}
	0&\omega_i\\0&\beta_i
	\end{pmatrix},\quad
	C^\star_i = \begin{pmatrix}
	0&0\\0&\gamma_i
	\end{pmatrix},\!
\end{align}
which after unitary rotations yield the final Kraus operators we are looking for:
\begin{align}
	A_i=U_i'^\dagger A^\star_i U_i,\quad B_i=U_i'^\dagger B^\star_i U_i,\quad C_i=U_i'^\dagger C^\star_i U_i.
\end{align}


\bibliography{Bibliography}

\end{document}